\def\showauthornotes{0}
\newtheorem{theorem}{Theorem}[section]
\newtheorem*{theorem*}{Theorem}
\newtheorem*{proposition*}{Proposition}
\newtheorem{lemma}[theorem]{Lemma}
\newtheorem*{lemma*}{Lemma}
\newtheorem*{conjecture*}{Conjecture}
\newtheorem*{fact*}{Fact}
\newtheorem*{hypothesis*}{Hypothesis}
\theoremstyle{definition}
\newtheorem{definition}[theorem]{Definition}
\newtheorem*{definition*}{Definition}
\newtheorem*{problem*}{Problem}
\theoremstyle{remark}
\newtheorem*{claim*}{Claim}
\newtheorem*{remark*}{Remark}
\newtheorem*{observation*}{Observation}
\newcommand\MYcurrentlabel{xxx}
\newcommand{\MYstore}[2]{%
  \global\expandafter \def \csname MYMEMORY #1 \endcsname{#2}%
}
\newcommand{\MYload}[1]{%
  \csname MYMEMORY #1 \endcsname%
}
\newcommand{\MYnewlabel}[1]{%
  \renewcommand\MYcurrentlabel{#1}%
  \MYoldlabel{#1}%
}
\newcommand{\MYdummylabel}[1]{}
\newcommand{\torestate}[1]{%
  \let\MYoldlabel\label%
  \let\label\MYnewlabel%
  #1%
  \MYstore{\MYcurrentlabel}{#1}%
  \let\label\MYoldlabel%
}
\newcommand{\restatetheorem}[1]{%
  \let\MYoldlabel\label
  \let\label\MYdummylabel
  \begin{theorem*}[Restatement of \prettyref{#1}]
    \MYload{#1}
  \end{theorem*}
  \let\label\MYoldlabel
}
\newcommand{\restatedef}[1]{%
  \let\MYoldlabel\label
  \let\label\MYdummylabel
  \begin{definition*}[Restatement of \prettyref{#1}]
    \MYload{#1}
  \end{definition*}
  \let\label\MYoldlabel
}
\newcommand{\restatelemma}[1]{%
  \let\MYoldlabel\label
  \let\label\MYdummylabel
  \begin{lemma*}[Restatement of \prettyref{#1}]
    \MYload{#1}
  \end{lemma*}
  \let\label\MYoldlabel
}
\newcommand{\restateprop}[1]{%
  \let\MYoldlabel\label
  \let\label\MYdummylabel
  \begin{proposition*}[Restatement of \prettyref{#1}]
    \MYload{#1}
  \end{proposition*}
  \let\label\MYoldlabel
}
\newcommand{\restatefact}[1]{%
  \let\MYoldlabel\label
  \let\label\MYdummylabel
  \begin{fact*}[Restatement of \prettyref{#1}]
    \MYload{#1}
  \end{fact*}
  \let\label\MYoldlabel
}
\newcommand{\restateobs}[1]{%
  \let\MYoldlabel\label
  \let\label\MYdummylabel
  \begin{observation*}[Restatement of \prettyref{#1}]
    \MYload{#1}
  \end{observation*}
  \let\label\MYoldlabel
}
\newcommand{\restate}[1]{%
  \let\MYoldlabel\label
  \let\label\MYdummylabel
  \MYload{#1}
  \let\label\MYoldlabel
}
\newcommand{\pref}{\prettyref}
\newcommand{\savehyperref}[2]{\texorpdfstring{\hyperref[#1]{#2}}{#2}}
\newcommand{\Authornote}[2]{{\sffamily\small\color{red}{[#1: #2]}}}
\newcommand{\Authornotecolored}[3]{{\sffamily\small\color{#1}{[#2: #3]}}}
\newcommand{\Authorcomment}[2]{{\sffamily\small\color{gray}{[#1: #2]}}}
\newcommand{\Authorstartcomment}[1]{\sffamily\small\color{gray}[#1: }
\newcommand{\Authorfnote}[2]{\footnote{\color{red}{#1: #2}}}
\newcommand{\Authorfixme}[1]{\Authornote{#1}{\textbf{??}}}
\newcommand{\Authormarginmark}[1]{\marginpar{\textcolor{red}{\fbox{\Large #1:!}}}}
\newcommand{\Authornote}[2]{}
\newcommand{\Authornotecolored}[3]{}
\newcommand{\Authorcomment}[2]{}
\newcommand{\Authorstartcomment}[1]{}
\newcommand{\Authorfnote}[2]{}
\newcommand{\Authorfixme}[1]{}
\newcommand{\Authormarginmark}[1]{}
\newcommand{\abs}[1]{\lvert#1\rvert}
\newcommand{\norm}[1]{\lVert#1\rVert}
\newcommand{\normt}[1]{\norm{#1}_2}
\newcommand{\snorm}[1]{\norm{#1}^2}
\newcommand{\normi}[1]{\norm{#1}_\infty}
\newcommand{\iprod}[1]{\langle#1\rangle}
\DeclareMathOperator*{\argmin}{argmin}
\newcommand{\tO}{\widetilde{O}}
\DeclareMathOperator{\Tr}{Tr}
\DeclareMathOperator{\sdp}{sdp}
\DeclareMathOperator{\opt}{opt}
\DeclareMathOperator{\poly}{poly}
\DeclareMathOperator{\sign}{sign}
\DeclareMathOperator{\Proj}{Proj}
\DeclareMathOperator*{\mean}{mean}
\newcommand{\R}{\mathbb R}
\newcommand{\eps}{\epsilon}
\newcommand{\oneton}{\{1,\dots n\}}
\newcommand{\va}{\boldsymbol{a}}
\newcommand{\vb}{\boldsymbol{b}}
\title{Faster Algorithms and Constant Lower Bounds for the Worst-Case Expected Error}
\author{%
  Jonah Brown-Cohen \\
  Chalmers University of Technology\\
  \texttt{jonahb@chalmers.se} \\
}
\begin{document}

\maketitle

\begin{abstract}
  The study of statistical estimation without distributional assumptions on data values, but with knowledge of data collection methods was recently introduced by Chen, Valiant and Valiant (NeurIPS 2020).
  In this framework, the goal is to design estimators that minimize the worst-case expected error. Here the expectation is over a known, randomized data collection process from some population, and the data values corresponding to each element of the population are assumed to be worst-case.
  Chen, Valiant and Valiant show that, when data values are $\ell_{\infty}$-normalized, there is a polynomial time algorithm to compute an estimator for the mean with worst-case expected error that is within a factor $\frac{\pi}{2}$ of the optimum within the natural class of semilinear estimators.
  However, their algorithm is based on optimizing a somewhat complex concave objective function over a constrained set of positive semidefinite matrices, and thus does not come with explicit runtime guarantees beyond being polynomial time in the input.
  In this paper we design provably efficient algorithms for approximating the optimal semilinear estimator based on online convex optimization. In the setting where data values are $\ell_{\infty}$-normalized, our algorithm achieves a $\frac{\pi}{2}$-approximation by iteratively solving a sequence of standard SDPs. When data values are $\ell_2$-normalized, our algorithm iteratively computes the top eigenvector of a sequence of matrices, and does not lose any multiplicative approximation factor.
  Further, using experiments in settings where sample membership is correlated with data values (e.g. "importance sampling" and "snowball sampling"), we show that our $\ell_2$-normalized algorithm gives a similar advantage over standard estimators as the original $\ell_{\infty}$-normalized algorithm of Chen, Valiant and Valiant, but with much lower computational complexity.
  We complement these positive results by stating a simple combinatorial condition which, if satisfied by a data collection process, implies that any (not necessarily semilinear) estimator for the mean has constant worst-case expected error.
\end{abstract}

\section{Introduction}
\label{sec:introduction}
Standard methods in statistical analysis are often based on the assumption that data values are drawn independently from some underlying distribution, and may perform poorly when this assumption is violated. Methods from robust statistics often modify this assumption, for example by allowing a small fraction of data values to be arbitrary outliers while the bulk of the data values remain independent samples.
However, there are natural settings in which data values may be strongly correlated to sample membership, so that distributional assumptions (e.g. independence of data values) can lead to inaccurate results. Furthermore, inaccuracy of predictions in such settings can have serious societal consequences--for example predictions of election outcomes or the spread of a contagious disease.

Recently, Chen, Valiant, and Valiant \cite{ChenVV20} introduced a framework to capture statistical estimation in such difficult settings, by assuming that the data values are worst-case, but that the estimation algorithm is able to leverage knowledge of the randomized data collection process.
The framework is simple to describe.
There is a set $\oneton$ of $n$ indices with corresponding data values $x = \{x_1,\dots x_n\}$ along with a probability distribution $P$ over pairs of subsets $A,B\subseteq \oneton$. Here $A$ is called the sample set and $B$ is called the target set. A sample $(A,B)$ is drawn from $P$ and the values $x_A = \{x_i \mid i \in A\}$ are revealed. The goal is then to estimate the value of some function $g(x_B)$ given only $A$, $B$ and $x_A$. The quality of an estimator in this framework is measured by its \emph{worst-case expected error}. Here the data values $x$ are worst-case and the expectation is with respect to the distribution $P$.

The worst-case expected error framework captures several natural settings where data values may be correlated to sample membership. \emph{Importance sampling} is the setting where the target set $B$ is always equal to the whole population $\oneton$, and the sample set $A$ is chosen by independently sampling each element $i \in \oneton$ with probability $p_i$.
In the worst-case expected error framework, the values $x_i$ may be arbitrarily correlated to the probability $p_i$ of being sampled. In \emph{snowball sampling} \cite{Heckathorn02}, the elements of $\oneton$ correspond to the vertices of a graph (e.g. a social network). A sample set $A$ is chosen by first picking a few initial vertices. Next, each chosen vertex recruits a randomly chosen subset of its neighbors into the sample, and this process is repeated until a desired sample size is reached. The target set $B$ can be either the whole population or a subset given by a few additional iterations of the recruitment process. Here it is natural to assume that the data values $x_i$ at neighboring vertices will be highly correlated.
Finally, \emph{selective prediction}\cite{Drucker13,QiaoV19} is a temporal sampling method where the population $\oneton$ corresponds to time steps, and the goal is to predict the average of future data values given the past (e.g. the average change in the stock market). The target set $B$ is some time window $\{t,\dots t+w\}$ and the sample $A$ is $\{1,\dots t\}$. If the random process for choosing the starting point $t$ and length $w$ of the time window is chosen appropriately, sub-constant error is attainable even when the $x_i$ are chosen to be worst-case values bounded by a constant\cite{Drucker13,QiaoV19}.

In \cite{ChenVV20}, the authors design an algorithm for computing estimators for the mean of the target set $B$. In particular, they restrict their attention to the class of \emph{semilinear} estimators i.e. estimators which are restricted to be a linear combination of the sample values $x_A$, but where the weights of the linear combination may depend arbitrarily on $P$, $A$, and $B$. In the setting where the data values are bounded, the authors design an algorithm that outputs a semilinear estimator with worst-case expected error at most a $\frac{\pi}{2}$ factor larger than that of the optimal semilinear estimator. They then demonstrate that the estimator output by their algorithm has significantly lower expected error than many standard estimators in the importance sampling, snowball sampling, and selective prediction settings where data values are correlated to sample membership. However, several open questions remain, even for the case of computing the optimal semilinear estimator for the mean.

First, the algorithm in \cite{ChenVV20} is a concave maximization over a constrained set of positive semidefinite matrices, which is solved using a general purpose convex programming solver. In particular this means that no explicit bounds on the runtime of the algorithm are given beyond being polynomial in $n$. This leads us to ask:
\begin{quote}
  \emph{Question 1: Is there a simpler algorithm to compute a semilinear estimator with approximately optimal worst-case expected error that comes with explicit runtime guarantees?}
\end{quote}
Second, the choice of $\ell_{\infty}$-normalization (i.e. assuming the data values are bounded) is natural for some settings such as polling or testing for infectious disease. However, $\ell_2$-normalization (i.e. $\frac{1}{n}\sum_i x_i^2 \leq 1$) is also a natural choice for settings where some data values may be much larger than others such as predicting stock prices. This motivates:
\begin{quote}
  \emph{Question 2: Is there an efficient algorithm to compute a semilinear estimator with approximately optimal worst-case expected error in the $\ell_2$-bounded case?}
\end{quote}
Finally, the setting of the worst-case expected error is quite challenging, and so one would expect that for many randomized data collection processes it is not possible to achieve non-trivial worst-case expected error. Thus, one might ask:
\begin{quote}
  \emph{Question 3: Are there simple properties of a data collection process that ensure that no estimator can achieve sub-constant worst-case expected error?}
\end{quote}

\subsection{Our Results}
We answer the first two questions by designing simple algorithms based on online gradient descent for approximating the optimal semilinear estimator in both the $\ell_{\infty}$-bounded and the $\ell_2$-bounded case. To be able to describe our algorithms we first formally define the input.
\begin{definition}[Definition 2 in \cite{ChenVV20}]
  A joint sample-target distribution $P$ over $\oneton$ is given by the uniform distribution over $m$ pairs $(A_i,B_i)$ where $A_i,B_i \subseteq \oneton$.
\end{definition}
Because any distribution $P$ on pairs $(A,B)$ may be approximated to arbitrary accuracy by a uniform distribution this is a reasonable choice of parametrization. In many settings the $m$ uniform pairs are obtained by sampling sufficiently many times from the known randomized data-collection process (see Appendix C of \cite{ChenVV20} for details). Thus, one should think of $m$ as being much larger than $n$, say at least $\Omega(n^2)$. Our algorithms' runtimes also depend on a parameter $\rho$ which we define to be $n$ times the worst-case expected error of the optimal semilinear estimator.

\begin{theorem}[Informal--see \pref{thm:optinfty}]
  There is an algorithm based on online gradient descent for approximating the optimal semilinear estimator to within a multiplicative $\frac{\pi}{2}$ and additive $\eps$ error for $\ell_{\infty}$-bounded values. The algorithm runs in $O(\rho^2\log\rho)$ iterations each taking $\tO(mn^{\omega - 1} + n^{7/2})$ time, where $\omega$ is the current matrix multiplication exponent.
\end{theorem}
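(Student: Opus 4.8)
The plan is to set up the problem of computing the optimal semilinear estimator as a min-max optimization, exhibit the inner (worst-case over data values $x$) problem as an SDP, and then run online gradient descent on the outer variable (the estimator weights), using the SDP solver as an approximate subgradient oracle at each step. The starting point, from \cite{ChenVV20}, is that for a fixed semilinear estimator the worst-case expected squared error over $\ell_\infty$-bounded $x$ can be written as a quadratic form $\max_{x \in [-1,1]^n} x^\top M(w) x$ for a matrix $M(w)$ depending linearly (or affinely) on the estimator weights $w$; by the standard Grothendieck/MaxCut-type SDP relaxation this is within a factor $\frac{\pi}{2}$ of the SDP value $\max_{X \succeq 0, X_{ii} \le 1} \iprod{M(w), X}$. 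So approximating the optimal semilinear estimator to within a multiplicative $\frac{\pi}{2}$ reduces to minimizing $f(w) := \max_{X \succeq 0, X_{ii}\le 1} \iprod{M(w), X}$ over the (convex) set of valid weight vectors.

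The next step is to observe that $f$ is convex in $w$ (a max of affine functions), so online/projected gradient descent applies: at iteration $t$, solve the SDP at the current point $w_t$ to obtain both the value and a maximizer $X_t$; then $\nabla_w \iprod{M(w), X_t}$ is a subgradient of $f$ at $w_t$, and we take a projected gradient step. The standard regret bound for online gradient descent gives, after $T$ steps, an average iterate whose value is within $O(GD/\sqrt{T})$ of optimal, where $D$ is the diameter of the feasible region for $w$ and $G$ bounds the subgradient norm. The key quantitative work is to bound $G$ and $D$ in terms of $n$ and $\rho$: since $\rho$ is $n$ times the optimal worst-case expected error, one can restrict the search for $w$ to a ball of radius $\poly(\rho)$ without loss (any near-optimal estimator has weights of bounded norm), and the subgradients are likewise controlled by $\rho$ and $n$. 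Plugging in and solving $O(GD/\sqrt{T}) \le \eps$ yields $T = O(\rho^2 \log \rho)$ iterations after the appropriate rescaling/setting of $\eps$ relative to $\rho$.

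The per-iteration cost breaks into two pieces. Forming the matrix $M(w_t)$ (and the objective matrix for the SDP) from the $m$ sample-target pairs requires aggregating $m$ rank-one-ish contributions on $n$-dimensional vectors, which costs $\tO(m n^{\omega-1})$ using fast matrix multiplication to batch the updates. Solving the resulting SDP over $\set{X \succeq 0 : X_{ii} \le 1}$ to the required accuracy — this is exactly a MaxCut-type SDP — can be done in $\tO(n^{7/2})$ time by known fast SDP solvers for this constrained class (e.g.\ matrix-multiplicative-weights / Arora–Kale-style methods specialized to diagonally constrained PSD matrices). Summing gives $\tO(mn^{\omega-1} + n^{7/2})$ per iteration, and multiplying by the iteration count gives the claimed total.

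The main obstacle I anticipate is the bookkeeping that ties $\rho$ to the geometry of the optimization: one must argue carefully that truncating the weight domain to radius $\poly(\rho)$ does not hurt the optimum (so that $D$ is really $\poly(\rho)$), and that the subgradient norm $G$ along the trajectory is correspondingly bounded, so that the regret bound closes at $T = O(\rho^2\log\rho)$ rather than some larger polynomial. A secondary subtlety is handling the fact that the SDP is solved only approximately each iteration: the approximate maximizer $X_t$ gives only an $\eps$-approximate subgradient, and one must check that online gradient descent is robust to this (it is — approximate subgradients degrade the regret bound only additively), and that the $\frac{\pi}{2}$ rounding step at the end converts the near-optimal $w$ for $f$ into a near-optimal \emph{actual} semilinear estimator without additional loss beyond the stated multiplicative and additive terms.
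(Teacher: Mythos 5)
Your proposal is correct and follows essentially the same route as the paper: the $\frac{\pi}{2}$ Grothendieck/SDP relaxation, online gradient descent with the (approximately) optimal SDP matrix $X^{(t)}$ supplying the convex cost function at each step, a regret bound in which the feasible ball (radius $\sqrt{\pi m p/2}$ centered at $\vb$) and gradient bound $G \le 2nr/m$ combine to give $GD = O(\rho)$, and per-iteration cost $\tO(mn^{\omega-1})$ for forming $M(\va)$ plus $\tO(n^{7/2})$ for the $n$-constraint SDP. The only cosmetic differences are that the paper obtains the $\log\rho$ factor from a doubling search over an upper bound $p \ge \opt(P)$ rather than from rescaling $\eps$, and it uses the interior-point solver of Jiang et al.\ (with $\log(1/\eps)$ dependence) rather than an Arora--Kale-type solver.
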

Note that when there exists a semilinear estimator achieving worst-case expected error $O(\frac{1}{n})$, the algorithm requires $O(1)$ iterations. Each iteration of the algorithm requires solving a standard SDP in $n$ dimensions with $n$ constraints (in fact the only constraints are that the diagonal entries of the psd matrix are constrained to all be equal to one).

\begin{theorem}[Informal--see \pref{thm:opttwo}]
  There is an algorithm based on online gradient descent for approximating the optimal semilinear estimator to within an additive $\eps$ error for $\ell_{2}$-bounded values. The algorithm runs in $O(\rho^2\log\rho)$ iterations each taking $O(mn)$ time.
\end{theorem}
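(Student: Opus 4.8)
The plan is to parallel our $\ell_\infty$ algorithm (\pref{thm:optinfty}), exploiting the fact that for $\ell_2$-bounded data the inner worst-case optimization over data values is an eigenvalue computation rather than an NP-hard quadratic program — this is what both eliminates the $\tfrac{\pi}{2}$ factor and makes each iteration near-linear time. Concretely, for a pair $(A_j,B_j)$ and sample weights $w^{(j)}$, the error of the corresponding semilinear estimator on a data vector $x$ is a linear functional $\iprod{v_j(w^{(j)}),x}$, where $v_j(\cdot)$ is affine: coordinate $i$ of $v_j(w^{(j)})$ equals $w^{(j)}_i$ on $A_j\setminus B_j$, $-1/\abs{B_j}$ on $B_j\setminus A_j$, $w^{(j)}_i-1/\abs{B_j}$ on $A_j\cap B_j$, and $0$ elsewhere. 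Writing $M(w):=\tfrac1m\sum_{j=1}^{m}v_j(w^{(j)})v_j(w^{(j)})^{\top}\succeq 0$, the worst-case expected squared error of the estimator with weights $w$ over $\set{x:\tfrac1n\sum_i x_i^2\le 1}$ is exactly $n\,\lambda_{\max}(M(w))$, so the target quantity is $\min_w n\,\lambda_{\max}(M(w))$, which is convex since $f(w):=n\,\lambda_{\max}(M(w))=n\sup_{\norm{u}=1}u^{\top}M(w)u$ is a supremum of functions each convex in $w$. Using $n\,\lambda_{\max}(M)=\max_{X\succeq 0,\ \Tr X\le n}\iprod{M,X}$, this equals $\min_w\max_{X\succeq 0,\,\Tr X\le n}\iprod{M(w),X}$; in contrast to the $\ell_\infty$ case, where the analogous unit-diagonal SDP captures the true value only up to $\tfrac{\pi}{2}$, this is an \emph{exact} identity — which is precisely why no multiplicative loss is incurred.

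The algorithm is then online gradient descent on the convex function $f$ over a bounded set of weights (equivalently, playing the outer player in the above min--max problem by online gradient descent while best-responding on $X$). Each iteration computes an approximate top eigenvector $u_t$ of $M(w_t)$, forms a subgradient $g_t$ of $f$ at $w_t$ — whose coordinate for $i\in A_j$ is $\tfrac{2n}{m}\iprod{v_j(w^{(j)}_t),u_t}\,(u_t)_i$ — takes a projected gradient step, and returns the average iterate. The efficiency comes from never materializing $M(w_t)$: since $M(w_t)=\tfrac1m\sum_j v_jv_j^{\top}$ with each $v_j$ supported on $A_j\cup B_j$, a matrix--vector product with $M(w_t)$ — hence one power-method step and the subgradient computation — costs $O\bigl(\sum_j(\abs{A_j}+\abs{B_j})\bigr)=O(mn)$. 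The standard regret bound for online gradient descent certifies that the average iterate is an $\eps$-approximate minimizer of $f$ after $O(G^2R^2/\eps^2)$ iterations, where $R$ bounds the diameter of the weight domain and $G$ bounds $\norm{g_t}$ on it, giving the $O(mn)$ per-iteration running time claimed in \pref{thm:opttwo}.

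It remains to choose the weight domain and step size so that the iteration count is $O(\rho^2\log\rho)$, for $\rho=n\cdot\mathrm{err}^{*}$. The point is that both $R$ and $G$ are controlled by $\rho$: a near-optimal weight vector can be taken inside a ball of radius $R=O(\poly(\rho))$, since if the weights were much larger then some diagonal entry $M(w)_{ii}$ — and hence $\lambda_{\max}(M(w))$ — would already be large, contradicting near-optimality (quantitatively, $\tfrac1m\sum_j\norm{v_j(w^{*})}^2=\Tr M(w^{*})\le n\,\lambda_{\max}(M(w^{*}))=(\mathrm{err}^{*})^2$, which bounds the weight norms); and along the trajectory $\norm{g_t}^2\le\tfrac{4n}{m}f(w_t)$, so $G$ is likewise small once $f$ is controlled on the run. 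With the appropriate parametrization of the weights, learning rate, and a doubling search over $\rho$ (accounting for the logarithmic factor), the $O(G^2R^2/\eps^2)$ bound collapses to $O(\rho^2\log\rho)$ iterations, and translating additive error in $f=n\cdot(\mathrm{err})^2$ back to additive $\eps$ error in the error itself is routine.

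I expect this last part to be the main obstacle: establishing the a priori, $\rho$-dependent bounds on the domain radius and on the subgradients that are tight enough to yield the stated $O(\rho^2\log\rho)$ count, together with error bookkeeping careful enough to certify that the multiplicative approximation factor is exactly $1$ and not $1+o(1)$. A secondary technical point is that the top eigenvector is computed only approximately, so the $g_t$ are inexact subgradients; I would handle this by running enough power-method iterations that the induced error is dominated by $\eps$ (a $\polylog$ or inverse-spectral-gap factor, absorbed into a $\tO(mn)$ per iteration under the usual conventions), or by invoking the robustness of online gradient descent to bounded adversarial noise in the gradients.
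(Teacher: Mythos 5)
Your proposal takes essentially the same route as the paper's proof of \pref{thm:opttwo}: online gradient descent with linearized costs given by an approximate top eigenvector of $M(\va)$ (so the relaxation is exact and no $\frac{\pi}{2}$ is lost), the domain-radius bound via $\frac{1}{m}\sum_i\snorm{a_i^*-b_i}=\Tr M(\va^*)\le n\norm{M(\va^*)}=\opt(P)$, rank-one matrix--vector products giving $O(mn)$ work per step (up to the $1/\eps$ power-method factor you correctly flag), and a doubling search supplying the $\log\rho$ factor. The only blemish is notational: $\opt(P)$ is already the worst-case expected \emph{squared} error, so the trace bound is $\le\opt(P)$ rather than $(\mathrm{err}^*)^2$ and no translation from ``$f=n\cdot(\mathrm{err})^2$'' back to the error is needed---this does not affect the argument.
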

In addition to the improved runtime, each iteration of the algorithm for $\ell_2$-bounded values only requires computing the eigenvector corresponding to the largest eigenvalue of an $n \times n$ matrix, making the algorithm practical and simple to implement. We provide the details of both algorithms in \pref{sec:algos}.

Recall that the SDP algorithm from \cite{ChenVV20} obtained lower expected error than several standard estimators on various synthetic datasets where data values are correlated to sample membership.
To compare to these empirical results obtained for the $\ell_{\infty}$-bounded case, we implement our algorithm for $\ell_2$-bounded values and perform experiments on the same synthetic datasets from \cite{ChenVV20}. We find that the expected error of our algorithm on these synthetic datasets matches that of the SDP algorithm from \cite{ChenVV20}, while being simpler to implement and having lower computational complexity. See \pref{sec:experiments} for details.

Finally, in \pref{sec:lowerbound} we answer the third question by providing a simple combinatorial condition for a sample-target distribution $P$, which ensures that any (not necessarily semilinear) estimator has constant worst-case expected error.
\subsection{Related Work}
The most closely related work is \cite{ChenVV20} which defined the framework for the worst-case expected error and designed the first algorithms in this setting. The papers \cite{Drucker13,QiaoV19} on selective prediction are also closely related as they design efficient estimators in the worst-case expected error setting for a specific choice of data collection distribution.

There has been extensive recent work on the theory and practice of robust estimation in high dimensions, where some fraction of data values are arbitrarily corrupted and the rest remain independent samples from the distribution \cite{DiakonikolasKKL19,LaiRV16,BalakrishnanDLS17,DiakonikolasKK017,DiakonikolasKKLMS18,DiakonikolasKS18}. Most closely related to our work is the recent paper of Hopkins, Li and Zhang on robust mean estimation using online convex optimization/regret minimization \cite{Hopkins0Z20}. While that work focuses on a quite different problem setting, the high-level idea of using methods in online convex optimization to design simple and efficient algorithms for robust estimation is closely connected to our work.

\subsection{Preliminaries}
We focus on the setting where the data values $x$ are real numbers and the goal is to estimate the mean of the target set. Since the mean scales linearly with the data values, it is natural to assume that the values $x$ lie in some bounded set $D \subseteq \R^n$. For our purposes $D$ will either be the $\ell_\infty$-ball or the $\ell_2$-ball.
\begin{definition}
  Given a sample-target distribution $P$ and a bounded set $D \subseteq \R^n$,
  an estimator $f(x_A,A,B)$ is a real-valued function which takes as input the data values $x_A$ and the sample-target index sets $(A,B)$.
  The \emph{worst-case expected error} of $f$ on $P$ is given by
  \[
    \max_{x\in D}\frac{1}{m}\sum_{i=1}^m\left(f(x_{A_i},A_i,B_i) - \mean(x_{B_i})\right)^2
  \]
  We use $\norm{v}$ to denote the $\ell_2$-norm of a vector $v$ and $\normi{v}$ to denote the $\ell_{\infty}$-norm. For a matrix $M$ we use $\norm{M}$ to denote the operator norm. For a subspace $W\subseteq \R^n$, we use $W^{\bot}$ to denote the orthogonal complement subspace, and we write $\Pi_W$ for the orthogonal projection onto $W$.
\end{definition}

\section{Algorithms for the Worst-case Expected Error}
\label{sec:algos}
In this section we design algorithms for approximating the best semilinear estimator for the mean using tools from online convex optimization. As in \cite{ChenVV20} we focus on the class of semilinear estimators which compute a linear combination of the data values $x_A$ where the weights of the linear combination can depend on the sets $A$ and $B$.
For convenience we introduce the notation $W_i \subseteq \R^n$ to denote the subspace of vectors that have non-zero coordinates only on indices in $A_i$.
Then a semilinear estimator takes the form $f(x_{A_i},A_i,B_i) = \iprod{a_i,x}$ where each vector $a_i \in W_i$. To further simplify notation we denote by $b_i \in \R^n$ the vector which takes value $\frac{1}{\abs{B_i}}$ on coordinates in $B_i$ and is zero otherwise, so that $\mean(x_{B_i}) = \iprod{b_i,x}$. Using this notation we have:
\begin{definition}
  \label{def:semilinear-error}
  Given a sample-target distribution $P$ and a set $D\subseteq\R^n$ the worst-case expected error of the optimal semilinear estimator is given by
  \[
    \opt(P) = \min_{\{a_i \in W_i\}_{i=1}^m} \max_{x\in D} \frac{1}{m}\sum_{i=1}^m \iprod{a_i - b_i, x}^2
  \]
\end{definition}

\paragraph{SDP Relaxation.}
The first step to approximating the best semilinear estimator is to introduce an SDP relaxation for the inner maximization in \pref{def:semilinear-error}.
To simplify notation we make the following definition.
\begin{definition}
  Let $\va = \{a_1,\dots a_m\}$ where $a_i \in W_i$ for all $i$. Define
  \[
    M(\va) = \frac{1}{m}\sum_i (a_i - b_i)(a_i - b_i)^{\top}.
  \]
\end{definition}

Then the inner maximization in \pref{def:semilinear-error} can be written more simply as
\[
  \max_{x\in D} x^\top M(\va) x
\]
When $D = \{x \in \R^n \mid \normi{x} \leq 1\}$ the inner maximization can be relaxed to the semidefinite program
\begin{equation}
  \label{eqn:sdpinfty}
  \sdp_{\infty}(\va) = \max_{X \succeq 0, X_{j,j} = 1} \iprod{M(\va),X}
\end{equation}
which has an optimal value within a factor $\frac{\pi}{2}$ of the true worst-case expected error as shown via the rounding method of \cite{Nesterov98}. See \cite{ChenVV20} for a self-contained explanation. Therefore to approximate the best semilinear estimator it suffices to approximately solve the min-max optimization problem
\begin{equation}
  \label{eqn:minmaxsdp}
  \min_{\va} \sdp_{\infty}(\va)
\end{equation}

\paragraph{Algorithm via Online Gradient Descent.}
Our first theorem states that we can approximately minimize $\sdp_{\infty}(\va)$ with an application of online gradient descent.
\begin{theorem}
  \label{thm:optinfty}
  Let $P$ be a sample-target distribution, $\rho = n \cdot \opt(P)$, and $\eps > 0$. There is an algorithm which computes a set of $m$ vectors $\va' = \{a'_1,\dots a'_m\}$ such that
  \[
    \sdp_{\infty}(\va') \leq \min_{\va} \sdp_{\infty}(\va) + \eps.
  \]
  The algorithm runs in $O\left(\frac{\rho^2\log\rho}{\eps^2} \right)$ iterations, and $\tO(\frac{\rho^2\log\rho}{\eps^2} (mn^{\omega - 1} + n^{7/2}\log(1/\eps))$ time, where $\omega$ is the current matrix multiplication exponent.
\end{theorem}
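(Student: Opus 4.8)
The plan is to cast the minimization of $\sdp_{\infty}(\va)$ as an online learning problem and invoke a standard regret bound for online gradient descent. The key structural observation is that $\sdp_{\infty}(\va)$ is a convex function of $\va$: it is a maximum over $X \succeq 0, X_{j,j}=1$ of the linear functionals $\va \mapsto \iprod{M(\va),X}$, and each such functional is convex in $\va$ because $M(\va)$ is a sum of rank-one terms $(a_i-b_i)(a_i-b_i)^\top$ whose inner product with a fixed PSD matrix $X$ is a convex quadratic in each $a_i$. So $\sdp_{\infty}$ is a pointwise supremum of convex functions, hence convex, and we can minimize it by online gradient descent on the decision variable $\va$, using the supporting hyperplane at the current iterate as the loss. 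Concretely, at iteration $t$ with current iterate $\va^{(t)}$, one solves the (standard, diagonal-constrained) SDP in \pref{eqn:sdpinfty} to obtain an optimal $X^{(t)}$; the loss function for that round is the linear-in-$X^{(t)}$-but-quadratic-in-$\va$ map $\ell_t(\va) = \iprod{M(\va),X^{(t)}}$, and a subgradient of $\sdp_\infty$ at $\va^{(t)}$ is the gradient of $\ell_t$ at $\va^{(t)}$, whose $i$-th block is $\frac{2}{m} \Pi_{W_i}\bigl(X^{(t)}(a_i^{(t)} - b_i)\bigr)$ (the projection onto $W_i$ enforcing the constraint $a_i \in W_i$). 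Running online gradient descent with these losses, the standard regret bound gives $\frac{1}{T}\sum_t \ell_t(\va^{(t)}) - \min_{\va \in \cK} \frac{1}{T}\sum_t \ell_t(\va) \le \frac{DG}{\sqrt{T}}$, where $D$ bounds the diameter of the feasible region $\cK$ we optimize over and $G$ bounds the subgradient norms; then, because $\ell_t(\va^{(t)}) = \sdp_\infty(\va^{(t)})$ and $\ell_t(\va) \le \sdp_\infty(\va)$ for all $\va$, convexity and the standard online-to-batch argument (passing to $\bar\va = \frac1T\sum_t \va^{(t)}$) yields $\sdp_\infty(\bar\va) \le \min_{\va\in\cK}\sdp_\infty(\va) + \frac{DG}{\sqrt T}$.

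The next step is to pin down $\cK$, $D$, and $G$ so that the iteration count comes out as $O(\rho^2\log\rho/\eps^2)$. The point is that we do not need to optimize over all of $\bigoplus_i W_i$; it suffices to restrict to a ball of radius comparable to $\rho$. Here one argues that the optimal $\va^\star$ achieving $\min_\va \sdp_\infty(\va)$ can be taken to satisfy $\frac{1}{m}\sum_i \normt{a_i^\star - b_i}^2 = O(\opt(P))$ — since $\sdp_\infty(\va^\star) \le \frac{\pi}{2}\opt(P)$ and $\iprod{M(\va),X} \ge \iprod{M(\va),I/n}\cdot$(something) lets one lower bound $\sdp_\infty$ by (a constant times) $\frac{1}{n}\cdot\frac1m\sum_i\normt{a_i-b_i}^2$, i.e. by $\frac{1}{n}\Tr M(\va)$ up to constants, using that $X = I$ is feasible (diagonal entries all $1$). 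Hence $\frac1m\sum_i \normt{a_i^\star-b_i}^2 \le n\cdot\sdp_\infty(\va^\star) = O(\rho)$. This both bounds the relevant diameter $D = O(\sqrt{\rho})$ (after recentering by the $b_i$'s, whose norms are controlled) and, via $\|X^{(t)}\| \le \Tr X^{(t)} = n$ wait — more carefully $\|X^{(t)}\|\le n$ and $\|\Pi_{W_i}(X^{(t)}(a_i-b_i))\| \le n\normt{a_i-b_i}$, so $G^2 = \frac{4}{m^2}\sum_i \|\cdot\|^2$ and on the feasible ball $G = O(\rho)$ appropriately normalized. Plugging $D = O(\sqrt\rho)$ and $G = O(\sqrt\rho)$ (in the right normalization) into $DG/\sqrt T \le \eps$ gives $T = O(\rho^2/\eps^2)$, and the extra $\log\rho$ factor comes from the approximation error incurred in solving each SDP to sufficient accuracy (or from a standard doubling/guess-and-check on the unknown $\rho$). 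The per-iteration cost is dominated by forming $M(\va^{(t)})$ and solving the diagonal-constrained SDP: the matrix $M(\va^{(t)})$ is a sum of $m$ rank-one updates, computable in $\tO(mn^{\omega-1})$ time, and the SDP with $n$ equality constraints is solved in $\tO(n^{7/2}\log(1/\eps))$ time by interior-point methods, giving the claimed bound.

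The main obstacle I anticipate is getting the diameter bound right: a priori $\va$ ranges over an unbounded set, so one must either (i) show the iterates stay in a bounded region automatically, or (ii) project onto a ball of radius $\Theta(\sqrt\rho)$ and prove this ball contains a near-optimal point. Option (ii) is cleanest but requires the lower bound $\sdp_\infty(\va) \gtrsim \frac1n\cdot\frac1m\sum_i\normt{a_i-b_i}^2$, which in turn needs a careful argument that evaluating the SDP at a specific feasible $X$ (morally $X=I$, but one must be careful that $\iprod{M(\va),I} = \Tr M(\va) = \frac1m\sum_i\normt{a_i-b_i}^2$, so this is actually immediate once we note $I$ is feasible). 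A secondary subtlety is that $\rho$ is not known in advance, so the algorithm must run with a guessed value and double it until the regret guarantee certifies success — this is where the $\log\rho$ factor enters. Everything else (convexity of $\sdp_\infty$, the subgradient formula, the online-to-batch conversion) is routine, and the per-iteration runtime is a straightforward accounting of rank-one updates plus a black-box SDP solve.
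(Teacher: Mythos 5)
Your proposal is correct and takes essentially the same approach as the paper: online gradient descent on the losses $f_t(\va)=\iprod{M(\va),X^{(t)}}$, a diameter bound from evaluating the SDP at the feasible point $X=I$ (giving $\frac{1}{m}\sum_i\snorm{a_i^*-b_i}\le \sdp_{\infty}(\va^*)\le\frac{\pi}{2}\opt(P)$), the gradient bound via $\norm{X}\le\Tr X=n$, projection onto a ball centered at $\vb$, and a doubling guess for $\opt(P)$ supplying the $\log\rho$ factor. The only immaterial differences are that you return the averaged iterate via online-to-batch conversion where the paper returns the best iterate $\argmin_t\iprod{M(\va^{(t)}),X^{(t)}}$ and explicitly converts the $(1+\frac{\eps}{10})$-multiplicative SDP accuracy into an additive error using the bound $\sdp_{\infty}(\va^*)\le\pi$, and you assert rather than derive the $O(mn^{\omega-1})$ per-iteration cost, which the paper obtains by grouping the vectors $a_i-b_i$ into $\frac{m}{n}$ blocks of size $n\times n$ and using fast matrix multiplication.
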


Note that the parameter $\rho$ depends on the optimal value of the worst-case expected error. In particular, if the optimal worst-case expected error is $O(\frac{1}{n})$ then $\rho = O(1)$, and the iteration count of the algorithm is $O(\frac{1}{\eps^2})$.
The algorithm is an application of online gradient descent where in each step we solve an instance of $\sdp_{\infty}(\va)$ and use the solution as a convex cost function. The analysis is based on standard regret bounds for online gradient descent.

\begin{algorithm}[H]
  \KwIn{A sample-target distribution $P$, accuracy parameter $\eps$, upper bound on optimum $p$.}
  \KwOut{A sequence of $m$ vectors $\va = \{a_1,\dots a_m\}$ with $a_i \in W_i$ for all $i$.}
  Let $r = \sqrt{\frac{\pi m p}{2}}$\\
  Let $\beta = \sum_{i=1}^m\snorm{\Pi_{W_i^\perp}b_i}$.\\
  Let $a_i^{(1)} = \frac{1}{\abs{A_i}} \mathbbm{1}_{A_i}$ for all $i$.\\
  \For{$t$ from $1$ to $\frac{36\pi^2 n^2 p^2}{\eps^2}$}{
    Let $\va^{(t)} = \{a_1^{(t)},\dots a_m^{(t)}\}$, and let $\eta_t = \frac{m}{n\sqrt{t}}$.\\
    Let $X^{(t)}$ be a $\left(1+\frac{\eps}{10}\right)$-approximate solution to $\sdp_\infty(\va^{(t)})$.\\
    $a_i^{(t+1)} \gets a_i^{(t)} - \eta_t \frac{2}{m}\Pi_{W_i}X^{(t)} (a_i^{(t)} - b_i)$ for all $i$.\\
    Set $\lambda^{(t)} = \min \left\{1,\sqrt{\frac{r^2 - \beta}{\sum_{i=1}^m\snorm{a_i^{(t+1)} - \Pi_{W_i}b_i}}}\right\}$.\\
    $a_i^{(t+1)} \gets \lambda^{(t)} a_i^{(t+1)} + (1-\lambda^{(t)})\Pi_{W_i} b_i$.
  }
  \Return $\va^{(t^*)}$ where $t^* = \argmin_t \iprod{M(\va^{(t)}), X^{(t)}}$.
  \caption{Online gradient descent for minimizing $\sdp_\infty(\va)$.}
  \label{alg:sdpinfty}
\end{algorithm}
We now give the analysis of the algorithm using standard regret bounds for online gradient descent (see the text \cite{Hazan16}). See \pref{app:linftyproofs} for the referenced lemmas.
\begin{proof}
  The main observation is that the algorithm is precisely online gradient descent where the convex cost function observed at each step is given by $f_t(\va) = \iprod{M(\va),X^{(t)}}$. Indeed viewing $\va$ as a vector in $nm$ dimensions (one for each $a_i$) the component of the gradient $\nabla f_t(\va)$ corresponding to $a_i$ is $\frac{2}{m}\Pi_{W_i}X^{(t)}(a_i-b_i)$.
  Using the positive semidefinite Grothendieck inequality, one can show that for the optimal solution $\va^*$ lies in $B_{r^*}(\vb)$, the $\ell_2$-ball of radius $r^* = \sqrt{\frac{\pi m \opt(P)}{2}}$ centered at $\vb = \{b_1,\dots,b_m\}$ (\pref{lem:sdpinftydiam}).
  Thus given an upper bound $p \geq \opt(P)$ it is sufficient to limit the search to $B_r(\vb)$ for $r=\sqrt{\frac{\pi m p}{2}}$. The last lines of the loop are just projection onto this ball (\pref{lem:sdpinftyproj}).
  Finally, for $\va$ in $B_r(\vb)$, a straightforward calculation (\pref{lem:sdpinftygrad}) gives $\norm{\nabla f_t(\va)} \leq \frac{2nr}{m}$.

  Thus the algorithm is online gradient descent with feasible set diameter $D = 2r$ and gradients bounded by $G = \frac{2nr}{m}$.
  Thus by the textbook analysis of online gradient descent \cite{Hazan16} we have
  \begin{align*}
    \frac{1}{T}\sum_{t = 1}^T f_t(\va^{(t)}) - \min_{\va' \in B_r(\vb)}\frac{1}{T}\sum_{t = 1}^T f_t(\va') \leq \frac{3GD}{2\sqrt{T}} = \frac{3\pi n p}{\sqrt{T}}.
  \end{align*}
  Letting $\va^* = \argmin_{\va}\sdp_\infty(\va)$ (which we know is contained in $B_r(\vb)$) we have
  \[
    \frac{1}{T}\sum_{t = 1}^T f_t(\va^{(t)}) - \frac{1}{T}\sum_{t = 1}^T f_t(\va^*) \leq \frac{3\pi n p}{\sqrt{T}}.
  \]
  Noting that $f_t(\va^*) = \iprod{M(\va^*),X^{(t)}} \leq \sdp_{\infty}(\va^*)$ yields
  \[
    \frac{1}{T}\sum_{t = 1}^T f_t(\va^{(t)}) \leq \sdp_{\infty}(\va^*) + \frac{3\pi n p}{\sqrt{T}} \leq \sdp_{\infty}(\va^*) + \frac{\eps}{2}
  \]
  where the last line follows from the choice of $T = \frac{36\pi^2 n^2 p^2}{\eps^2}$.
  For  $t_* = \argmin_t f_t(\va^{(t)})$ we therefore have
  \[
   \iprod{M(\va^{(t^*)}),X^{(t^*)}} = f_t(\va^{(t^*)}) \leq \sdp_{\infty}(\va^*) + \frac{\eps}{2}.
  \]
  Since the value of $\sdp_{\infty}(\va^*)$ is bounded, the fact that $X^{(t^*)}$ is a multiplicative $(1 + \frac{\eps}{10})$-approximation implies that $X^{(t^*)}$ is an additive $\frac{\eps}{2}$-approximation to $\sdp_{\infty}(\va^{(t^*)})$ (see \pref{lem:fastpackingsdp}). Thus we conclude that
  \[
     \sdp_{\infty}(\va^{(t^*)}) = \max_{X\succeq 0, X_{jj} = 1} \iprod{M(\va^{(t^*)}),X} \leq \iprod{M(\va^{(t^*)}),X^{(t^*)}} + \frac{\eps}{2} \leq \sdp_{\infty}(\va^*) + \eps
  \]
  i.e. $\va^{(t^*)}$ is an $\eps$-additive-approximation of $\min_{\va}\sdp_{\infty}(\va)$.

  To analyze the runtime note that each iteration requires approximately solving an instance of $\sdp_{\infty}(\va^{(t)})$, multiplying the solution $X^{(t)}$ by each vector $a_i^{(t)} - b_i^{(t)}$, and then rescaling by $\lambda^{(t)}$.
  First, the matrix $M(\va)$ can be constructed in time $O(mn^{\omega-1})$ by grouping the vectors $a_i^{(t)} - b_i^{(t)}$ into $\frac{m}{n}$ blocks of size $n \times n$, and using fast matrix multiplication for each block.
  Next, using the fact that $M(\va)$ is positive semi-definite, the approximate solution $X^{(t)}$ can be computed in time $\tO(n^{7/2}\log(1/\eps))$ using the interior point SDP solver of \cite{JiangKLP020} (see \pref{lem:fastpackingsdp} for details).
  Finally, we can group the vectors $a_i^{(t)} - b_i^{(t)}$ columnwise into $\frac{m}{n}$ matrices of size $n\times n$, and then use fast matrix multiplication to compute the product of each matrix with $X^{(t)}$. This results in a runtime of $O(mn^{\omega - 1})$ where $\omega$ is the current matrix multiplication exponent. Finally, $\lambda^{(t)}$ can be computed in $O(mn)$ time.
  Putting it all together there are $O(\frac{p^2n^2}{\eps^2})$ iterations each of which takes $\tO(mn^{\omega - 1} + n^{7/2}\log(1/\eps))$ time, given an upper bound $p\geq \opt(P)$. We can find an upper bound that is at most $2\opt(P)$ by starting with $p = \frac{1}{n}$ and repeatedly doubling at most $\log(n\opt(P)) = O(\log \rho)$ times. This yields the final iteration count of $O(\frac{\rho^2\log\rho}{\eps^2})$.
\end{proof}

\paragraph{$\ell_2$-norm Bounded Values.}
We now turn to the setting where the data values are bounded in $\ell_2$-norm i.e. when $D = \{x \in \R^n \mid \norm{x} \leq \sqrt{n}\}$.
In this case the inner maximization in \pref{def:semilinear-error} is equal to $n$ times the maximum eigenvalue of $M(\va)$
\begin{equation}
  \label{eqn:sdptwo}
  \sdp_{2}(\va) = \max_{\normt{x} \leq \sqrt{n}} x^\top M(\va) x = n\norm{M(\va)}.
\end{equation}
The choice of normalization is such that the second moment is one i.e. $\frac{1}{n}\sum_{j=1}^n x_j^2 = 1$.
A subtle difference in the $\ell_2$-bounded setting is that $\opt(P)$ may not be bounded by a constant for certain pathological examples. This is unlike the $\ell_{\infty}$-bounded setting where choosing $\va$ to be the all zeros vector immediately gives an upper bound of $\opt(P) \leq 1$.
However, if the target distribution is the full population mean (i.e. $b_i = \frac{1}{n}\mathbbm{1}$ for all $i$), then setting $\va$ to all zeros and applying Cauchy-Schwarz yields
\[
  \opt(P) \leq \max_{\normt{x} \leq \sqrt{n}} \frac{1}{m}\sum_{i=1}^{m} \iprod{b_i,x}^2 \leq \frac{1}{m}\sum_{i=1}^{m}\frac{1}{n}\cdot n = 1.
\]
More generally, it only makes sense to compute estimators approximating the worst-case expected error when the achievable error goes to zero with $n$ i.e. $\opt(P) = o(1)$. Therefore, for the $\ell_2$-bounded setting we make the additional assumption that $\opt(P)\leq 1$.
We now state our main theorem for the $\ell_2$-bounded setting.
\begin{theorem}
  \label{thm:opttwo}
  Let $P$ be a sample-target distribution with $\opt(P)\leq 1$, $\rho = n \cdot \opt(P)$, and $\eps > 0$. There is an algorithm which computes a set of $m$ vectors $\va' = \{a'_1,\dots a'_m\}$ such that
  \[
    \sdp_{2}(\va') \leq \min_{\va} \sdp_{2}(\va) + \eps.
  \]
  The algorithm runs in $O\left(\frac{\rho^2\log\rho}{\eps^2} \right)$ iterations and takes $O\left(\frac{\rho^2\log\rho}{\eps^3}mn \right)$ time.
\end{theorem}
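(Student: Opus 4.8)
The plan is to mirror the proof of \pref{thm:optinfty}, replacing the SDP relaxation of the inner maximization by the \emph{exact} evaluation $\sdp_{2}(\va)=n\norm{M(\va)}$ from \pref{eqn:sdptwo}; because the inner problem is now solved exactly — the maximizer of $x^\top M(\va)x$ over $\normt{x}\le\sqrt n$ is $\sqrt n$ times the top eigenvector of $M(\va)$ — no multiplicative loss is incurred, which is why the $\frac{\pi}{2}$ disappears. I would run online gradient descent on the variable $\va=\{a_1,\dots,a_m\}$ restricted to the product of subspaces $\prod_i W_i$, where at step $t$ the observed convex cost is $f_t(\va)=\iprod{M(\va),x^{(t)}(x^{(t)})^\top}=\frac1m\sum_i\iprod{a_i-b_i,x^{(t)}}^2$ and $x^{(t)}$ is an (approximate) top eigenvector of $M(\va^{(t)})$ scaled to $\normt{x^{(t)}}=\sqrt n$. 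Each $f_t$ is convex and quadratic in $\va$, with the block of $\nabla f_t(\va)$ corresponding to $a_i$ equal to $\frac2m\iprod{x^{(t)},a_i-b_i}\,\Pi_{W_i}x^{(t)}$, so the update and projection steps are exactly those of \pref{alg:sdpinfty} with $X^{(t)}$ replaced by the rank-one matrix $x^{(t)}(x^{(t)})^\top$.

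Three ingredients then feed into the textbook regret bound. First, a diameter bound: since $M(\va^*)\succeq 0$ is $n\times n$ and $\Tr M(\va^*)=\frac1m\sum_i\snorm{a_i^*-b_i}$, we get $\sdp_{2}(\va^*)=n\norm{M(\va^*)}\ge \Tr M(\va^*)=\frac1m\sum_i\snorm{a_i^*-b_i}$, so the optimal $\va^*$ lies in the ball $B_{r^*}(\vb)$ of radius $r^*=\sqrt{m\,\opt(P)}$ (the $\ell_2$-analogue of \pref{lem:sdpinftydiam}, now with no $\pi/2$). Given an upper bound $p\ge\opt(P)$ it suffices to optimize over $B_r(\vb)\cap\prod_iW_i$ with $r=\sqrt{mp}$, and the last two lines of the loop are the projection onto this set (as in \pref{lem:sdpinftyproj}). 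Second, a gradient bound: for $\va\in B_r(\vb)$, using $\norm{\Pi_{W_i}x^{(t)}}\le\normt{x^{(t)}}=\sqrt n$ and Cauchy--Schwarz gives $\norm{\nabla f_t(\va)}^2\le\frac{4n^2}{m^2}\sum_i\snorm{a_i-b_i}\le\frac{4n^2r^2}{m^2}$, i.e. $G=\frac{2nr}{m}$ (the analogue of \pref{lem:sdpinftygrad}). Third, plugging $D=2r$ and $G=\frac{2nr}{m}$ into the online-gradient-descent regret bound gives $\frac1T\sum_t f_t(\va^{(t)})-\min_{\va'\in B_r(\vb)}\frac1T\sum_t f_t(\va')\le\frac{3GD}{2\sqrt T}=\frac{6np}{\sqrt T}$, which is at most $\eps/2$ once $T=\Theta(n^2p^2/\eps^2)$.

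To conclude I would argue exactly as in \pref{thm:optinfty}. Since $f_t(\va^*)=\iprod{M(\va^*),x^{(t)}(x^{(t)})^\top}\le n\norm{M(\va^*)}=\sdp_{2}(\va^*)$, the regret bound yields $\min_t f_t(\va^{(t)})\le\sdp_{2}(\va^*)+\eps/2$; and for $t^*=\argmin_t f_t(\va^{(t)})$, the fact that $x^{(t^*)}$ is a good enough approximate top eigenvector makes $f_{t^*}(\va^{(t^*)})$ an additive $\eps/2$ underestimate of $\sdp_{2}(\va^{(t^*)})=n\norm{M(\va^{(t^*)})}$, so $\sdp_{2}(\va^{(t^*)})\le\sdp_{2}(\va^*)+\eps$. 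The doubling search over $p=\tfrac1n,\tfrac2n,\dots$ contributes the $\log\rho$ factor (and here $\opt(P)=\min_\va\sdp_{2}(\va)$ since the inner problem is solved exactly). For the runtime, every step touches $M(\va^{(t)})$ only through matrix--vector products $v\mapsto\frac1m\sum_i\iprod{a_i-b_i,v}(a_i-b_i)$, each costing $O(mn)$; the gradient step, the rescaling by $\lambda^{(t)}$, and the power iteration are all of the same order, with the number of power-iteration steps contributing the extra $1/\eps$, giving $O(\rho^2\log\rho\cdot mn/\eps^3)$ overall.

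I expect the main obstacle to be the last point: controlling the error from using only an \emph{approximate} top eigenvector. One must ensure the additive accuracy of the Rayleigh-quotient estimate is small relative to $\sdp_{2}\le\rho$ — which is where the standing assumption $\opt(P)\le1$ is used to make $\sdp_{2}$ bounded and hence turn the relative error of power iteration into an additive one — while keeping the number of power-iteration steps to $\poly(1/\eps)$ with only logarithmic dependence on $n$ (via a randomized start), and then verify that this inexactness fits into the $\eps/2$ budgets used both to select $t^*$ and to bound $\sdp_{2}(\va^{(t^*)})$. The diameter and gradient estimates, by contrast, are direct $\ell_2$-specializations of the lemmas behind \pref{thm:optinfty} and should be routine.
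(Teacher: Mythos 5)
Your proposal matches the paper's own proof essentially step for step: the same online-gradient-descent scheme with rank-one cost matrices $x^{(t)}(x^{(t)})^\top$, the same $\ell_2$-analogues of the diameter and gradient lemmas (via $\Tr M(\va^*)\le n\norm{M(\va^*)}=\opt(P)$ and $\norm{X^{(t)}}\le n$, giving $r=\sqrt{mp}$ and $G=\frac{2nr}{m}$), the same use of $\opt(P)\le 1$ to turn the power method's multiplicative guarantee into the additive $\eps/2$ budget at $t^*$, and the same runtime accounting (power iteration $\tO(mn/\eps)$, rank-one gradient products $O(mn)$, doubling search over $p$ for the $\log\rho$ factor). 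The only differences are immaterial constant factors in the regret bound, so the argument is correct and follows the paper's route.
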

If the optimal worst-case expected error is $\opt(P) = O(\frac{1}{n})$, then $O(\frac{1}{\eps^2})$ iterations suffices, and each iteration takes $O(\frac{mn}{\eps})$ time.
The algorithm for the $\ell_2$-bounded case is also an application of online gradient descent, where in each step we solve an instance of $\sdp_2(\va)$ and use it as a convex cost function. However, the fact that $\sdp_2(\va)$ corresponds to computing the maximum eigenvalue of an $n \times n$ matrix leads to the improved runtime per iteration of $O(\frac{mn}{\eps})$.

\begin{algorithm}[H]
  \KwIn{A sample-target distribution $P$, accuracy parameter $\eps$, upper bound on optimum $p$.}
  \KwOut{A sequence of $m$ vectors $\va = \{a_1,\dots a_m\}$ with $a_i \in W_i$ for all $i$.}
  Let $r = \sqrt{m p}$\\
  Let $\beta = \sum_{i=1}^m\snorm{\Pi_{W_i^\perp}b_i}$.\\
  Let $a_i^{(1)} = \frac{1}{\abs{A_i}} \mathbbm{1}_{A_i}$ for all $i$.\\
  \For{$t$ from $1$ to $\frac{36 n^2 p^2}{\eps^2}$}{
    Let $\va^{(t)} = \{a_1^{(t)},\dots a_m^{(t)}\}$, and let $\eta_t = \frac{m}{n\sqrt{t}}$.\\
    Let $X^{(t)} = x_t x_t^{\top}$, where $x_t$ is a $\left(1 + \frac{\eps}{10}\right)$-approximate solution to $\sdp_2(\va^{(t)})$.\\
    $a_i^{(t+1)} \gets a_i^{(t)} - \eta_t \frac{2}{m}\Pi_{W_i}X^{(t)} (a_i^{(t)} - b_i)$ for all $i$.\\
    Set $\lambda^{(t)} = \min \left\{1,\sqrt{\frac{r^2 - \beta}{\sum_{i=1}^m\snorm{a_i^{(t+1)} - \Pi_{W_i}b_i}}}\right\}$.\\
    $a_i^{(t+1)} \gets \lambda^{(t)} a_i^{(t+1)} + (1-\lambda^{(t)})\Pi_{W_i} b_i$.
  }
  \Return $\va^{(t^*)}$ where $t^* = \argmin_t \iprod{M(\va^{(t)}), X^{(t)}}$.
  \caption{Online gradient descent for minimizing $\sdp_2(\va)$.}
  \label{alg:sdptwo}
\end{algorithm}

To summarize briefly, the $\ell_2$-bounded algorithm can be essentially obtained by modifying \pref{alg:sdpinfty} by letting $X^{(t)} = x_t x_t^{\top}$ where $x_t$ is an $\frac{\eps}{2}$-additive-approximate eigenvector of $M(\va^{(t)})$.
The improved running time follows from two points. First, the classical power method can compute the required eigenvector in $\frac{mn}{\eps}$ time. Second, computing the gradient requires multiplying $X^{(t)}(a_i^{(t)} - b_i)$ for all $i$, just as in \pref{alg:sdpinfty}. However, since in the $\ell_2$-bounded case $X^{(t)} = vv^{\top}$ is rank one, each multiplication can be carried out in $O(n)$ time, for a total cost of $O(mn)$. The full details of the analysis can be found in \pref{app:ell2analysis}.

\section{The algorithm of Chen, Valiant and Valiant}
\label{sec:cvvalg}
At this point it is instructive to compare \pref{alg:sdpinfty} with the original SDP-based algorithm of \cite{ChenVV20}.
In Appendix C of \cite{ChenVV20} the authors show how to approximate any sample-target distribution $P$ by drawing at most $m = \poly(\frac{n}{\eps})$ samples. Thus $m$ should be generally thought of as a large polynomial in $n$.
The approach taken in \cite{ChenVV20} to approximate the optimal semilinear estimator is based on first replacing the inner maximization in \pref{def:semilinear-error} with $\sdp_{\infty}(\va)$. This only incurs an error of at most a factor of $\frac{\pi}{2}$, and additionally means that the problem is convex in $\va = \{a_1,\dots a_m\}$ and linear in $X$.
Thus the min-max theorem applies and the min and max can be exchanged, resulting in an inner minimization over $\va$ which can be solved explicitly. This gives rise to a somewhat complicated semidefinite program where the objective is a concave maximization depending on the inverse of the variable matrix.

The authors note that this SDP can be converted to a more standard SDP by using the Schur complement to re-express the matrix inverse. However this conversion increases the number of constraints in the SDP to at least $m$. The best known interior point solver \cite{JiangKLP020} has runtime $\tO(\sqrt{n}(mn^2 + m^{\omega} + n^{\omega}))$ for a general SDP with an $n$ dimensional PSD matrix variable and $m$ constraints. The dominant term in our setting is $\tO(\sqrt{n}m^{\omega})$.
For example, if $m = O(n^3)$ then the runtime for the general SDP solver in \cite{ChenVV20} is $\tO(n^{7.61})$, where we have used the fact that the current matrix multiplication exponent is approximately $\omega \approx 2.37$. The runtime of \pref{alg:sdpinfty} in this setting is $O(\rho^2n^{4.37})$.
Depending on the achievable worst-case expected error, $\rho$ can range from $O(1)$ to $O(n)$, so the runtime of \pref{alg:sdpinfty} ranges from $O(n^{4.37})$ to $O(n^{6.37})$, all of which are faster than $O(n^{7.61})$ for \cite{ChenVV20}.

To summarize, after making the appropriate reductions, the algorithm of \cite{ChenVV20} requires solving an SDP with $O(m)$ constraints. In contrast, the SDP solved in each iteration of \pref{alg:sdpinfty} has exactly $n$ constraints.
Therefore, given that SDP solvers have runtimes polynomial in the number of constraints, and that in our setting we should think of $m \gg n$, \pref{alg:sdpinfty} will tend to have a faster runtime even though it solves an SDP in each iteration.

%

\section{A constant lower bound}
\label{sec:lowerbound}
In this section we identify a simple combinatorial condition that ensures the worst-case expected error of \emph{any} estimator for the mean is at least a constant. Of course it is possible to construct trivial examples of sample-target distributions $P$ for which a constant lower bound holds. For example, consider a distribution $P$ where the samples $A_i \subseteq \{1,\dots \frac{n}{2}\}$ and $B_i \subseteq \{\frac{n}{2}+1,\dots n\}$ for all $i$.
In this case, given any estimator $f(x_A,A,B)$, a worst-case adversary is free to set the data values $x_{j}$ for $j\geq \frac{n}{2}$ arbitrarily in order to maximize $(f(x_{A_i},A_i,B_i) - \mean(x_{B_i}))^2$, as these values never appear as input to $f$.
Indeed, since the estimator has zero probability of ever observing one of the data values $x_j$ for $j\geq \frac{n}{2}$, there is no hope of estimating the mean of these values under worst-case assumptions.

To go beyond the trivial case it makes sense to require that every data value has some non-negligible probability under $P$ of being included in a sample set $A$. We will show that even in this case, the worst-case expected error can be constant. In fact, the condition we identify can hold for distributions $P$ where each coordinate $i$ is equally likely to be included in $A$.
\begin{definition}
  \label{def:nonexpanding}
  Let $\alpha > 0$ be a constant.
  A sample-target distribution $P$ is $\alpha$-non-expanding if there exists a subset $S\subseteq\{1,\dots n\}$ such that for an $\alpha$-fraction of the pairs $(A_i,B_i)$ exactly one of the following holds:
  \begin{enumerate}
    \item $A_i \subseteq S$ and $B_i \cap S = \emptyset$
    \item $A_i \cap S = \emptyset$ and $B_i \subseteq S$
  \end{enumerate}
\end{definition}
For example, the definition is satisfied with $\alpha =1$ by the distribution $P$ which half the time picks a uniform random subset $A$ from the first half of indices and $B$ from the second half, and half the time does the opposite.
More subtle examples are possible where the subset $S$ is arbitrary and $\alpha < 1$, so that a constant fraction of sample and target sets can have arbitrary intersection with $S$.
\begin{theorem}
  Let $P$ be an $\alpha$-non-expanding sample-target distribution. For any estimator $f(x_A,A,B)$
  \[
    \max_{x : \normi{x} = 1} \frac{1}{m}\sum_{i=1}^m\left(f(x_{A_i},A_i,B_i) - \mean(x_{B_i})\right)^2 \geq \frac{\alpha}{4}
  \]
\end{theorem}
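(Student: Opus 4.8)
The plan is to exploit the symmetry between the two cases in \pref{def:nonexpanding} by constructing two data vectors that are indistinguishable to the estimator on the relevant pairs, but whose target means differ by a constant. Fix the subset $S$ guaranteed by the $\alpha$-non-expanding condition, and let $\mathbbm{1}_S \in \R^n$ be its indicator. Consider the two data vectors $x^+ = \mathbbm{1}_S - \mathbbm{1}_{\bar S}$ and $x^- = -\mathbbm{1}_S + \mathbbm{1}_{\bar S} = -x^+$, both of which satisfy $\normi{x} = 1$. The key observation is that on any pair $(A_i, B_i)$ of type 1 (i.e. $A_i \subseteq S$), the restriction $x^+_{A_i}$ is the all-ones vector and $x^-_{A_i}$ is the all-minus-ones vector, while on any pair of type 2 (i.e. $A_i \cap S = \emptyset$, so $A_i \subseteq \bar S$) the situation reverses: $x^+_{A_i}$ is all-minus-ones and $x^-_{A_i}$ is all-ones.

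First I would pair up type-1 and type-2 pairs. Actually, rather than pairing, I would argue directly: for a type-1 pair $(A_i,B_i)$ we have $\mean(x^+_{B_i}) = -1$ (since $B_i \cap S = \emptyset$ forces $B_i \subseteq \bar S$) and $\mean(x^-_{B_i}) = +1$; for a type-2 pair $(A_i,B_i)$ we have $\mean(x^+_{B_i}) = +1$ and $\mean(x^-_{B_i}) = -1$. Meanwhile, for a type-1 pair the estimator sees $x^+_{A_i} = \mathbbm{1}$; I would then compare this with the value the estimator outputs on a type-2 pair fed the vector $x^-$, where it also sees the all-ones vector on $A_i$. The cleanest route is: for each pair $(A_i,B_i)$ appearing in the $\alpha$-fraction, let $v_i = f(\mathbbm{1}_{A_i}, A_i, B_i)$ be the estimator's output when fed the all-ones vector on the sample, and let $w_i = f(-\mathbbm{1}_{A_i}, A_i, B_i)$ when fed all-minus-ones. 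For a type-1 pair, under $x^+$ the squared error is $(v_i - (-1))^2 = (v_i+1)^2$ and under $x^-$ it is $(w_i - 1)^2$; for a type-2 pair, under $x^+$ it is $(w_i - 1)^2$ and under $x^-$ it is $(v_i + 1)^2$. So in either case the two vectors $x^+, x^-$ together contribute, on each such pair, the quantity $(v_i+1)^2 + (w_i-1)^2$ split across the two choices of data vector.

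Now I would average over the two data vectors: since $\max_x \frac1m \sum_i (\cdots)^2 \geq \frac12\big(\frac1m\sum_i(\cdots\text{ for }x^+)^2 + \frac1m\sum_i(\cdots\text{ for }x^-)^2\big)$, the worst-case expected error is at least $\frac{1}{2m}\sum_{i \in T}\big((v_i+1)^2 + (w_i-1)^2\big)$ where $T$ is the set of the $\geq \alpha m$ special pairs. For each $i$, the elementary bound $(v_i+1)^2 + (w_i-1)^2 \geq \tfrac12\big((v_i+1) - (w_i-1)\big)^2$ is not quite what I want; instead I use that $(a+1)^2 + (b-1)^2$ is minimized over $a,b$ at $a=-1, b=1$ giving $0$ — that's the wrong direction. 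The right elementary fact is $(v_i+1)^2 + (w_i-1)^2 \geq \tfrac{1}{2}\big((v_i+1)+(1-w_i)\big)^2 \cdot$ — hmm, let me instead just note that whatever $v_i, w_i$ are, we cannot have both $(v_i+1)^2$ and $(w_i-1)^2$ small simultaneously only if $v_i$ and $w_i$ are forced apart; but they are independent, so actually the adversary's leverage comes from a single pair being fed the two opposite vectors. So the correct pairing is: on a single type-1 pair, $x^+$ gives error $(v_i+1)^2$ and $x^-$ gives error $(w_i-1)^2$, and since $w_i$ is the output on all-minus-ones while $v_i$ is on all-ones — these are different inputs, so there is no constraint linking them, and the adversary gets $\max\{(v_i+1)^2,(w_i-1)^2\}$ only by committing to one global $x$. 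Hence I should not average but rather pick the better of $x^+, x^-$: $\max\big(\text{err}(x^+),\text{err}(x^-)\big) \geq \frac12(\text{err}(x^+)+\text{err}(x^-)) \geq \frac{1}{2m}\sum_{i\in T}\big((v_i+1)^2+(w_i-1)^2\big) \geq \frac{1}{2m}\sum_{i \in T} \big(\tfrac12((v_i+1)^2) + \tfrac12((w_i-1)^2)\big)\cdot 2$, and using $(v_i+1)^2 + (w_i - 1)^2 \geq \tfrac{1}{2}\big((v_i - w_i) + 2\big)^2$ is false in general; the clean bound is simply $\alpha \cdot \tfrac14 \cdot 2 = \tfrac{\alpha}{2}$ is too strong, so the factor $\tfrac14$ in the statement suggests they lose a further factor of $2$ somewhere — likely because they only use $(v_i+1)^2 + (w_i-1)^2 \geq \tfrac12(v_i - w_i + 2)^2$ combined with... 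The main obstacle, and the step I would spend the most care on, is getting this elementary per-pair inequality exactly right so that summing over the $\geq \alpha m$ pairs and dividing by $m$ yields precisely $\tfrac{\alpha}{4}$; I expect the honest statement is that for each special pair, $\max\{(v_i+1)^2, (w_i-1)^2\} \geq \tfrac14 \cdot (2)^2 \cdot \tfrac14$... concretely, since feeding the all-ones input the estimator must either output something $\geq 0$ (then $x^-$-type data with mean $+1$... ) — I would reduce to: there is a single global sign choice $x \in \{x^+, x^-\}$ making at least half the special pairs incur error $\geq 1$, hence total error $\geq \tfrac12 \cdot \alpha m \cdot 1 \cdot \tfrac1m \cdot \tfrac12 = \tfrac{\alpha}{4}$, where the argument is: on each special pair the estimator's output on the all-ones sample (which both $x^+$ restricted to a type-1 $A_i$ and $x^-$ restricted to a type-2 $A_i$ produce) is a fixed number $c_i$, and the true mean is $-1$ under $x^+$ for type-1 but $+1$ under $x^-$ for type-2; since $-1$ and $+1$ are at distance $2$, one of the two data vectors forces squared error $\geq 1$ on that pair, and a majority-vote choice of the global vector handles half the pairs, giving $\tfrac14 \alpha$.
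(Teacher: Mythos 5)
Your adversary uses only the antipodal pair $x^+ = \mathbbm{1}_S - \mathbbm{1}_{\bar{S}}$ and $x^- = -x^+$, and your final reduction claims that on each special pair one of these two vectors forces squared error at least $1$. That claim is false, and it rests on a sign slip: under $x^-$ a type-2 pair ($A_i \cap S = \emptyset$, $B_i \subseteq S$) has target mean $-1$, not $+1$ (you computed this correctly earlier in your own proposal). In fact, whenever the estimator observes an all-ones sample (type-1 under $x^+$, or type-2 under $x^-$) the target mean is $-1$, and whenever it observes an all-minus-ones sample the target mean is $+1$. Consequently the estimator $f(x_A,A,B) = -\mean(x_A)$ incurs zero error on every special pair under both $x^+$ and $x^-$; taking $\alpha = 1$ (e.g.\ the paper's example where $A$ is drawn from one half of the indices and $B$ from the other), your two-vector adversary certifies only a lower bound of $0$, so the approach cannot yield $\alpha/4$. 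The obstruction is exactly the one you flagged mid-proposal: $v_i = f(\mathbbm{1}_{A_i},A_i,B_i)$ and $w_i = f(-\mathbbm{1}_{A_i},A_i,B_i)$ are outputs on different inputs, so nothing links them, and $(v_i+1)^2 + (w_i-1)^2$ can be identically zero. Your closing "majority-vote" argument does not escape this; it only appears to because of the sign error.

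The missing idea is to keep the coordinates the estimator actually sees fixed across the candidate adversarial vectors and vary only the unseen ones. First, by pigeonhole, at least an $\alpha/2$ fraction of pairs are of one type, say $A_i \subseteq S$ and $B_i \cap S = \emptyset$. Always set $x_j = 1$ for $j \in S$; then on all these pairs the estimator's input is $\mathbbm{1}_{A_i}$ regardless of the values off $S$, so its outputs $c_i = f(\mathbbm{1}_{A_i},A_i,B_i)$ are fixed numbers. Now use the one remaining degree of freedom — the common value off $S$, which determines $\mean(x_{B_i})$ but is invisible to the estimator on these pairs — adversarially: at least half of the $c_i$ share a sign (the paper phrases this via the median $c$), and choosing $x_j = -\sign(c)$ for $j \notin S$ makes the target mean lie at distance at least $1$ from each such $c_i$, giving squared error at least $1$ on an $\alpha/4$ fraction of all pairs. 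Note the two candidate vectors here are $\mathbbm{1}_S - \mathbbm{1}_{\bar{S}}$ and the all-ones vector $\mathbbm{1}$ (both equal to $+1$ on $S$), not your antipodal pair; that is precisely what forces the estimator's observed data, and hence its output, to be the same in both scenarios while the target flips.
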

\begin{proof}
  Let $S$ be the subset provided by \pref{def:nonexpanding}.
  Without loss of generality we may assume that $A_i \subseteq S$ and $B_i \cap S = \emptyset$ at least for at least an $\frac{\alpha}{2}$ fraction of pairs $(A_i,B_i)$. Otherwise we could just switch the roles of $S$ and $\bar{S}$. Let $c$ be the median of $f(\mathbbm{1}_{A_i},A_i,B_i)$ on this $\frac{\alpha}{2}$ fraction. If $c$ is positive let $E$ be the at least $\frac{\alpha}{4}$ fraction of indices $i$ for which $f(\mathbbm{1}_{A_i},A_i,B_i) \geq c$.
  If $c$ is negative let $E$ be the at least $\frac{\alpha}{4}$ fraction of indices $i$ for which $f(\mathbbm{1}_{A_i},A_i,B_i) < c$.
  Set $x_j = 1$ for all $j \in S$ and $x_j = -\sign(c)$ for all $j \notin S$. Since $A_i \subseteq S$ and $B_i \cap S = \emptyset$ for all $i \in E$ we have
  \begin{align*}
    \frac{1}{m}\sum_{i=1}^m\left(f(x_{A_i},A_i,B_i) - \mean(x_{B_i})\right)^2
      &\geq \frac{1}{m}\sum_{i\in E}\left(f(x_{A_i},A_i,B_i) - \mean(x_{B_i})\right)^2\\
      &= \frac{1}{m}\sum_{i\in E}\left(f(\mathbbm{1}_{A_i},A_i,B_i) + \sign(c)\right)^2\\
      &\geq \frac{\alpha}{4}\left(c + \sign(c)\right)^2\\
      &\geq \frac{\alpha}{4}
  \end{align*}
  where the final inequality follows from the fact $\abs{c + \sign(c)} \geq 1$ for all $c \in \R$.
\end{proof}

\section{Experimental Results}
\label{sec:experiments}
In this section we empirically evaluate \pref{alg:sdptwo} in settings where data values are correlated to inclusion in a sample.
In \cite{ChenVV20} the authors show that solutions to $\argmin_{\va}\sdp_{\infty}(\va)$ clearly outperform standard choices for estimators in such settings. However, their algorithm requires using the general purpose convex programming solver MOSEK \cite{mosek} via the CVXPY package \cite{diamond2016cvxpy,agrawal2018rewriting}, which can be quite slow and memory intensive.
We implement \pref{alg:sdptwo} in Python, and run the algorithm on the same synthetic datasets as those in \cite{ChenVV20}.
For the algorithm of \cite{ChenVV20} we use the publicly available code at \url{https://github.com/justc2/worst-case-randomly-collected}.
We find that \pref{alg:sdptwo} has comparable empirical performance to the original algorithm of \cite{ChenVV20}, while being simpler and more computationally efficient. For more details of the experiments please see \pref{app:experimentaldetails}.

\paragraph{Importance Sampling} Importance sampling, where elements are sampled independently but with different probabilities, is one of the simplest examples where data values may be correlated with sample membership.
In this experiment the population size is $n=50$ and the element $i$ is included in the sample with probability $0.1$ for $i \leq 25$ and probability $0.5$ for $i > 25$.
We compare both \pref{alg:sdptwo} and the $\sdp_{\infty}$-based algorithm from \cite{ChenVV20} with two standard estimators for this setting. The first estimator is \emph{reweighting} which computes the weighted mean of the $x_i$, where the weight of each $x_i$ is the reciprocal of the probability that $x_i$ is included in the sample. The second is \emph{subgroup estimation} which computes the sample means of $x_i$ for $i \leq 25$ and $x_i$ for $i > 25$ separately, and then averages these two sample means.

We evaluate the results on three synthetic datasets: (1) \emph{Constant} where $x_i=1$ for all $i$.
(2) \emph{Intergroup variance} where $x_i = 1$ for $i \leq 25$ and $x_i=-1$ for $i > 25$.
(3) \emph{Intragroup variance} where $x_i = 1$ for odd indices $i$ and $x_i = -1$ for even $i$.
We also report worst-case $\ell_{\infty}$ and $\ell_2$ error by solving $\sdp_{\infty}(\va)$ and $\sdp_2(\va)$ for each estimator.
The expected squared errors for each estimator and dataset in this experiment appear in \pref{tab:importance}.
Note that on the three synthetic datasets the $\sdp_{\infty}$ algorithm from \cite{ChenVV20} and \pref{alg:sdptwo} have essentially identical expected error, while the former has lower $\sdp_{\infty}$ error and the latter has lower $\sdp_2$ error as one might expect. Furthermore, the error on the synthetic datasets is consistently low for \pref{alg:sdptwo}, while the standard estimators each have large error for at least one setting where data values are correlated to sample membership.
\begin{table}[h!]
  \centering
\begin{tabular}{ |c|c|c|c|c| }
  \hline
  Data Values & Reweighting & Subgroup Estimation & $\sdp_{\infty}$ Alg. \cite{ChenVV20} & \pref{alg:sdptwo}\\
 \hline
 Constant $(x_i = 1)$ & 0.100 & 0.018 & 0.051 & 0.052 \\
 Intergroup Variance & 0.100 & 0.018 & 0.053 & 0.052\\
 Intragroup Variance & 0.100 & 0.121 & 0.052 & 0.053\\
 Worst Case $\sdp_{\infty}$ & 0.101 & 0.122 & 0.053 & 0.062 \\
 Worst Case $\sdp_2$ & 0.181 & 0.222 & 0.088 & 0.078 \\
 \hline
\end{tabular}
\caption{Expected squared error for importance sampling experiment}
\label{tab:importance}
\end{table}

\paragraph{Snowball Sampling}
In our snowball sampling experiment we randomly draw $n=50$ points in the two dimensional unit square to be the population and let the target set $B$ be the entire population.
We construct a sample by first picking a random starting point, and then iteratively adding to the sample two of the five nearest neighbors of each point added so far, until $k=25$ points are included in the sample.
We compare \pref{alg:sdptwo} and the $\sdp_{\infty}$-based algorithm from \cite{ChenVV20} with the estimator that simply computes the sample mean. We evaluate these estimators on a dataset of spatially correlated values by setting $x_i$ equal to the sum of the two coordinates of the point in the unit square corresponding to element $i$. We also report worst-case $\ell_{\infty}$ and $\ell_2$ error by solving $\sdp_{\infty}(\va)$ and $\sdp_2(\va)$ for each estimator. The expected squared errors for each estimator and dataset in this experiment appear in \pref{tab:snowball}.
As in the previous example, we see that the $\sdp_{\infty}$ algorithm and \pref{alg:sdptwo} have equal error on the synthetic dataset (both clearly outperforming the sample mean estimator), and each algorithm does better than the other on the worst-case data values it was designed to optimize.
\begin{table}[h!]
  \centering
\begin{tabular}{ |c|c|c|c| }
  \hline
  Data Values & Sample Mean & $\sdp_{\infty}$ Alg. \cite{ChenVV20} & \pref{alg:sdptwo}\\
 \hline
 Spatially Correlate Values & 0.082 & 0.032 & 0.032  \\
 Worst Case $\sdp_{\infty}$ &  0.690 & 0.135 & 0.153  \\
 Worst Case $\sdp_2$ & 0.747 & 0.327 & 0.326  \\
 \hline
\end{tabular}
\caption{Expected squared error for snowball sampling experiment}
\label{tab:snowball}
\end{table}

\paragraph{Selective Prediction}
In the selective prediction experiment the population consists of $n = 32$ timesteps, and the goal is to predict some future target time window $\{t,\dots,t+w\}$ given the sample consisting of the past $\{1,\dots t\}$. Here $t<n$ is chosen uniformly at random, and $w$ is chosen uniformly from $\{1,2,4,8,16\}$. In this setting we compare the $\sdp_{\infty}$ algorithm from \cite{ChenVV20} and \pref{alg:sdptwo} with the selective prediction estimator of \cite{Drucker13,QiaoV19}. The selective prediction estimator simply computes the mean of the final $w$ elements of the sample $\{1,\dots t\}$, and is known to achieve expected worst-case error $O(\frac{1}{\log n})$.
We evaluate the worst-case $\ell_{\infty}$ and $\ell_2$ error of these by solving $\sdp_{\infty}(\va)$ and $\sdp_2(\va)$ and report the results in \pref{tab:selectivepred}. As in the previous cases both the $\sdp_{\infty}$ algorithm from \cite{ChenVV20} and \pref{alg:sdptwo} outperform the selective prediction estimator, and each of the two algorithms outperforms the other on the worst-case data values of the appropriate type.
\begin{table}[h!]
  \centering
\begin{tabular}{ |c|c|c|c|c| }
  \hline
  Data Values & Selective Prediction & $\sdp_{\infty}$ Alg. \cite{ChenVV20} & \pref{alg:sdptwo}\\
 \hline
 Worst Case $\sdp_{\infty}$ &  1.208 & 0.498 & 0.620  \\
 Worst Case $\sdp_2$ & 1.371 & 0.746 & 0.686  \\
 \hline
\end{tabular}
\caption{Expected squared error for selective prediction experiment.}
\label{tab:selectivepred}
\end{table}

\section*{Funding Disclosure}
This work was partially supported by the Wallenberg AI, Autonomous Systems and Software Program
(WASP) funded by the Knut and Alice Wallenberg Foundation.

\bibliographystyle{acm}
\bibliography{bib/wcexpect}

\clearpage
\appendix

\section{Missing proofs for the \texorpdfstring{$\ell_{\infty}$}{l-infinity}-bounded case}
\label{app:linftyproofs}
First we bound the diameter of the set of possible solutions for minimizing $\sdp_\infty(\va)$.
\begin{lemma}
  \label{lem:sdpinftydiam}
  Let $P$ be a sample-target distribution.
  The optimum $\va^* = \argmin_{\va}\sdp_\infty(\va)$ satisfies
  \[
    \frac{1}{m}\sum_{i=1}^m \norm{a^*_i - b_i}^2 \leq \frac{\pi}{2}\opt(P)
  \]
\end{lemma}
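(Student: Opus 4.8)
The plan is to sandwich the quantity $\frac{1}{m}\sum_{i=1}^m\norm{a_i^* - b_i}^2$ between $\Tr\big(M(\va^*)\big)$ from below and $\frac{\pi}{2}\opt(P)$ from above, using only two facts about $\sdp_\infty$: that the trace of $M(\va)$ is always a feasible objective value, and that $\sdp_\infty$ is a $\frac{\pi}{2}$-relaxation of the true inner maximum over the $\ell_\infty$-ball.

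First I would record the elementary identity $\frac{1}{m}\sum_i \norm{a_i - b_i}^2 = \Tr\big(M(\va)\big)$, which holds because $\Tr\big((a_i-b_i)(a_i-b_i)^\top\big) = \norm{a_i - b_i}^2$. Next, since the identity matrix $I$ is positive semidefinite with $I_{j,j} = 1$ for all $j$, it is feasible for the program \pref{eqn:sdpinfty} defining $\sdp_\infty(\va)$, so $\sdp_\infty(\va) \ge \iprod{M(\va),I} = \Tr\big(M(\va)\big)$ for every $\va$; in particular this holds at $\va = \va^*$. For the upper bound I would invoke the fact already recalled in the SDP relaxation paragraph (Nesterov's rounding, equivalently the positive semidefinite Grothendieck inequality): because $M(\va)$ is a nonnegative combination of rank-one positive semidefinite matrices, hence positive semidefinite, we have $\sdp_\infty(\va) \le \frac{\pi}{2}\max_{x:\normi{x}\le 1} x^\top M(\va) x$ for every $\va$. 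Minimizing both sides over $\va$ and using $\opt(P) = \min_{\va}\max_{x:\normi{x}\le 1} x^\top M(\va) x$ gives $\min_{\va}\sdp_\infty(\va) \le \frac{\pi}{2}\opt(P)$.

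Chaining these at the minimizer $\va^* = \argmin_{\va}\sdp_\infty(\va)$ yields
\[
  \frac{1}{m}\sum_{i=1}^m\norm{a_i^* - b_i}^2 = \Tr\big(M(\va^*)\big) \le \sdp_\infty(\va^*) = \min_{\va}\sdp_\infty(\va) \le \frac{\pi}{2}\opt(P),
\]
which is exactly the claimed bound. There is no substantive obstacle; the only point deserving a line of care is that the inner maximum of $x^\top M(\va) x$ over the $\ell_\infty$-ball equals its maximum over $\{\pm 1\}^n$ (since $x \mapsto x^\top M(\va) x$ is convex, so the maximum over the cube is attained at a vertex), which is what lets the standard $\frac{\pi}{2}$ guarantee of Nesterov's rounding apply verbatim. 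This $\frac{\pi}{2}$ is the positive semidefinite Grothendieck constant, which is why it reappears as the squared radius $r^* = \sqrt{\frac{\pi m \opt(P)}{2}}$ in \pref{alg:sdpinfty}.
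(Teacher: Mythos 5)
Your proof is correct and follows essentially the same route as the paper: lower-bound $\frac{1}{m}\sum_i\norm{a_i^*-b_i}^2 = \iprod{M(\va^*),I}$ by the SDP value via feasibility of the identity matrix, and upper-bound $\min_{\va}\sdp_\infty(\va)$ by $\frac{\pi}{2}\opt(P)$ via the positive semidefinite Grothendieck (Nesterov) guarantee. The extra remark about the maximum over the cube being attained at a vertex is fine but not needed beyond what the paper already cites.
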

\begin{proof}
  Let $D = \{x\in\R^n \mid \normi{x}\leq 1\}$ and $\va = \{a_1,\dots a_m\}$.
  Recall that by the positive semidefinite Grothendieck inequality $\sdp_\infty(\va)$ is a relaxation of the optimization problem  $\max_{x\in D} x^\top M(\va) x$, with value at most $\frac{\pi}{2}$ times larger. Thus
  \[
    \min_{\va}\sdp_\infty(\va) \leq \frac{\pi}{2}\min_{\va}\max_{x\in D} x^\top M(\va) x = \frac{\pi}{2}\opt(P)
  \]
  Letting $\va^* = \argmin_{\va}\sdp_\infty(\va)$ we then have
  \[
    \frac{1}{m}\sum_{i=1}^m\norm{a^*_i - b_i}^2 = \iprod{M(\va^*),I} \leq \max_{X \succeq 0, X_{j,j} = 1} \iprod{M(\va^*),X} = \min_{\va}\sdp_\infty(\va) \leq \frac{\pi}{2}\opt(P)
  \]
  as desired.
\end{proof}

Next we compute a bound on the gradient of the cost functions used to minimize $\sdp_\infty(\va)$ via online gradient descent.
\begin{lemma}
  \label{lem:sdpinftygrad}
  Let $\va = \{a_1,\dots a_m\}$ with $a_i \in W_i$ for all $i$.
  Let $X \succeq 0$ be an $n\times n$ positive semidefinite matrix with $X_{jj} = 1$ for $j \in \{1,\dots n\}$.
  Let $f(\va) = \iprod{M(\va),X}$.
  For any $\va$ satisfying
  \[
    \sum_{i=1}^m \norm{a_i - b_i}^2 \leq r^2
  \]
  we have
  \[
    \norm{\nabla f(\va)} \leq \frac{2 n r}{m}
  \]
\end{lemma}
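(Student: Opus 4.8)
The plan is a direct computation. First I would expand the objective using the definition of $M(\va)$: since $\iprod{M(\va),X} = \frac{1}{m}\sum_{i=1}^m \iprod{(a_i-b_i)(a_i-b_i)^\top, X} = \frac{1}{m}\sum_{i=1}^m (a_i-b_i)^\top X (a_i - b_i)$, the function $f$ is a sum of $m$ quadratic forms, one in each block variable $a_i$. Viewing $\va$ as a vector in $\R^{nm}$ with the constraint that the $i$-th block lies in $W_i$, the component of $\nabla f(\va)$ corresponding to $a_i$ is $\frac{2}{m}\Pi_{W_i} X (a_i - b_i)$ (the projection $\Pi_{W_i}$ appears because $a_i$ is constrained to $W_i$); this is exactly the gradient expression already used in \pref{alg:sdpinfty}.

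Next I would bound the norm block by block. Since orthogonal projection is non-expansive, $\norm{\Pi_{W_i} X (a_i - b_i)} \le \norm{X(a_i-b_i)} \le \norm{X}\cdot\norm{a_i - b_i}$, where $\norm{X}$ is the operator norm. Summing the squared block norms gives
\[
  \snorm{\nabla f(\va)} = \sum_{i=1}^m \Snorm{\tfrac{2}{m}\Pi_{W_i}X(a_i-b_i)} \le \frac{4\norm{X}^2}{m^2}\sum_{i=1}^m \norm{a_i - b_i}^2 \le \frac{4\norm{X}^2 r^2}{m^2},
\]
using the hypothesis $\sum_i \norm{a_i - b_i}^2 \le r^2$ in the last step.

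The one remaining ingredient is a bound on $\norm{X}$. Here I would use that $X \succeq 0$, so all its eigenvalues are nonnegative and hence $\norm{X} \le \Tr(X)$; and $\Tr(X) = \sum_{j=1}^n X_{jj} = n$ by the constraint $X_{jj} = 1$. Plugging $\norm{X} \le n$ into the display above yields $\norm{\nabla f(\va)} \le \frac{2nr}{m}$, as claimed. None of these steps is a real obstacle — the only point that requires a moment's care is remembering that the gradient lives in the constrained space $\bigoplus_i W_i$, so the per-block gradient carries the projection $\Pi_{W_i}$, which is then discarded for free by non-expansiveness; everything else is Cauchy–Schwarz and the elementary fact that the operator norm of a PSD matrix is at most its trace.
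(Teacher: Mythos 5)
Your proof is correct and follows essentially the same route as the paper's: compute the blockwise gradient $\frac{2}{m}\Pi_{W_i}X(a_i-b_i)$, drop the projection by non-expansiveness, bound $\norm{X}\le\Tr(X)=n$ using positive semidefiniteness and the unit diagonal, and finish with the hypothesis $\sum_i\norm{a_i-b_i}^2\le r^2$. No gaps.
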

\begin{proof}
  Recalling the definition of $M(\va)$ we have
  \[
    f(\va) = \iprod{M(\va),X} = \frac{1}{m}\sum_{i=1}^m (a_i - b_i)^{\top}X(a_i - b_i).
  \]
  Therefore the component of $\nabla f(\va)$ corresponding to $a_i$ is given by
  \[
    \nabla_{a_i} f(\va) = \frac{2}{m}\Pi_{W_i} X(a_i - b_i).
  \]
  Thus we estimate the norm by
  \begin{align}
    \norm{\nabla f(\va)}^2 &= \frac{4}{m^2}\sum_{i=1}^m \norm{\Pi_{W_i}X(a_i - b_i)}^2\nonumber\\
      &\leq \frac{4}{m^2}\sum_{i=1}^m \norm{X(a_i - b_i)}^2\nonumber\\
      &\leq \frac{4}{m^2}\sum_{i=1}^m n^2\norm{(a_i - b_i)}^2\nonumber
  \end{align}
  where the last line follows from the fact that $\norm{X} \leq \Tr{X} = n$. Plugging in the assumed bound on $\sum_{i=1}^m\snorm{(a_i - b_i)}$ yields the desired result.
\end{proof}

The next lemma shows that final steps in \pref{alg:sdpinfty} correspond to projection onto $B_r(\vb)$, the ball of radius $r$ centered at $\vb$. In particular, since $\va = \{a_1,\dots a_m\}$ is restricted to the subspace where $a_i \in W_i$ for each $i$, the projection occurs within this subspace.
\begin{lemma}
  \label{lem:sdpinftyproj}
  For $\va = \{a_1,\dots a_m\}$ let $W\subseteq \R^{nm}$ be the subspace where $a_i \in W_i$ for all $i$.
  Let $\beta = \sum_{i=1}^m\snorm{\Pi_{W_i^\perp}b_i}$.
  For any $r$ such that $B_r(\vb)\cap W \neq \emptyset$, the projection $\Proj(\va)$ of $\va$ onto $B_r(\vb)$ restricted to the subspace $W$ is given by:
  \begin{align*}
    \lambda &= \min \left\{1,\sqrt{\frac{r^2 - \beta}{\sum_{i=1}^m\snorm{a_i - \Pi_{W_i}b_i}}}\right\}\\
    \Proj(a_i) &\gets \lambda a_i + (1-\lambda)\Pi_{W_i} b_i.
\end{align*}
\end{lemma}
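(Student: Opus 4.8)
The plan is to reduce the projection onto $B_r(\vb)$ \emph{within} the subspace $W$ to an ordinary Euclidean ball projection, by re-centering the ball at $\Pi_W\vb$ instead of at $\vb$.

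First, observe that $W = W_1\times\cdots\times W_m \subseteq \R^{nm}$ is a linear subspace whose orthogonal projection acts blockwise, so that $\Pi_W\vb = \{\Pi_{W_1}b_1,\dots,\Pi_{W_m}b_m\}$ and $\Pi_{W^\perp}\vb = \{\Pi_{W_1^\perp}b_1,\dots,\Pi_{W_m^\perp}b_m\}$; in particular $\snorm{\Pi_{W^\perp}\vb} = \sum_{i=1}^m\snorm{\Pi_{W_i^\perp}b_i} = \beta$. Next, for any $\va\in W$ the Pythagorean identity applied coordinatewise gives $\snorm{\va-\vb} = \snorm{\va-\Pi_W\vb} + \snorm{\Pi_{W^\perp}\vb} = \snorm{\va-\Pi_W\vb} + \beta$. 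Hence, restricted to $W$, the constraint $\va\in B_r(\vb)$ is equivalent to $\snorm{\va-\Pi_W\vb}\le r^2-\beta$, and the hypothesis $B_r(\vb)\cap W\neq\emptyset$ forces $r^2-\beta\ge 0$, so that $B_r(\vb)\cap W$ is exactly the Euclidean ball of radius $\sqrt{r^2-\beta}$ centered at $\Pi_W\vb$ inside $W$. Since $\va$ already lies in the subspace $W$, the closest point of this set to $\va$ is the same whether distances are measured in $W$ or in the ambient $\R^{nm}$.

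Finally, apply the elementary formula for Euclidean ball projection: the nearest point of $B_R(c)$ to $z$ is $c + \min\{1,R/\norm{z-c}\}(z-c)$. Taking $z=\va$, $c=\Pi_W\vb$, $R=\sqrt{r^2-\beta}$, and using $\snorm{\va-\Pi_W\vb} = \sum_{i=1}^m\snorm{a_i-\Pi_{W_i}b_i}$, gives $\lambda = \min\bigl\{1,\sqrt{(r^2-\beta)/\sum_{i=1}^m\snorm{a_i-\Pi_{W_i}b_i}}\bigr\}$; reading off the $i$-th block yields $\Proj(a_i) = \Pi_{W_i}b_i + \lambda(a_i-\Pi_{W_i}b_i) = \lambda a_i + (1-\lambda)\Pi_{W_i}b_i$, as claimed.

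There is no serious obstacle here: the argument is essentially the standard ball-projection formula. The only point that needs a moment's care is the re-centering step, i.e.\ verifying that $\Pi_W$ decomposes blockwise so that $\snorm{\va-\vb}$ splits cleanly into the $W$-component plus the fixed constant $\beta$, after which the minimization over $B_r(\vb)\cap W$ becomes a minimization over a genuine ball and the closed-form projection applies directly.
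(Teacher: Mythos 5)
Your proof is correct and follows essentially the same route as the paper: re-center at $\Pi_W\vb$, use the blockwise Pythagorean identity $\snorm{\va-\vb}=\snorm{\va-\Pi_W\vb}+\beta$ for $\va\in W$ to identify $B_r(\vb)\cap W$ with the ball of radius $\sqrt{r^2-\beta}$ about $\Pi_W\vb$ inside $W$, and then apply the standard Euclidean ball-projection formula. Your write-up is in fact slightly more explicit than the paper's (stating the closed-form ball projection and noting that distances in $W$ agree with ambient distances), but there is no substantive difference.
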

\begin{proof}
  Note first that $\vb = \Pi_W \vb + \Pi_{W^\bot}\vb$ for any $\vb \in \R^{nm}$. The squared distance from $\vb$ to $W$ is given by $\beta = \sum_{i=1}^m\snorm{\Pi_{W_i^\perp}b_i}$.
  Thus if $B_r(\vb)\cap W \neq \emptyset$ then $\beta \leq r^2$ and the definition of $\lambda$ in the lemma statement makes sense.

  Next observe that since each $\Pi_{W_i^\bot}b_i$ is orthogonal to all vectors in $W_i$
  \[
    \sum_{i=1}^m \snorm{a_i - b_i} = \sum_{i=1}^m \snorm{a_i - \Pi_{W_i}b_i} + \sum_{i=1}^m \snorm{\Pi_{W_i^\bot}b_i}
  \]

for any $\va \in W$.
Thus the intersection $B_r(\vb)\cap W$ is equal to those vectors $\va \in W$ such that $\sum_{i=1}^m \snorm{a_i - \Pi_{W_i}b_i} \leq r^2 - \beta$. This is precisely the ball of radius $r^2 - \beta$ in $W$ centered at $\Pi_W \vb$.
Thus, for any $\va$ not already in this ball, the projection is given by moving a $\lambda$ fraction of the distance along the line from $va$ to $\Pi_W\vb$, exactly as described in the lemma statement.
\end{proof}

Next we show how the fast interior point SDP solver of \cite{JiangKLP020} can be used to solve an instance of $\sdp_{\infty}$ to the accuracy required for the proof of \pref{thm:optinfty}.
\begin{lemma}
  \label{lem:fastpackingsdp}
  A positive semidefinite matrix $X$ satisfying $\sdp_{\infty}(\va)\leq \left(1+\frac{\eps}{10}\right)\iprod{M(\va),X} $ can be compute in time $\tO(n^{7/2}\log(1/\eps))$.
  Furthermore, if $\iprod{M(\va),X} \leq \sdp_{\infty}(\va^*) + \frac{\eps}{2}$ then the approximation is additive i.e. $\sdp_{\infty}(\va) \leq \iprod{M(\va),X} + \frac{\eps}{2}.$
\end{lemma}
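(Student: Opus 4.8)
The plan is to recognize $\sdp_\infty(\va)$ as a \emph{packing} semidefinite program, run the interior-point solver of \cite{JiangKLP020} on it, and then upgrade that solver's high-accuracy guarantee to the stated multiplicative bound. For the runtime claim, first note that $\sdp_\infty(\va)=\max\{\iprod{M(\va),X}\;:\;X\succeq 0,\ X_{jj}=1\text{ for all }j\}$ is an SDP over an $n\times n$ PSD matrix variable with exactly $n$ linear constraints, and that it is a packing SDP: the objective matrix $M(\va)$ is PSD by construction, each constraint matrix is a coordinate projection and hence PSD, and because $M_{jj}(\va)\ge 0$ one may freely relax each equality $X_{jj}=1$ to $X_{jj}\le 1$ (given any $X$ feasible for the relaxed program, adding the PSD diagonal matrix with entries $1-X_{jj}$ restores $X_{jj}=1$ and only increases the objective). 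Feeding this program to the solver of \cite{JiangKLP020}, whose running time on an SDP with an $n\times n$ PSD variable and $m$ constraints is $\tO(\sqrt n\,(mn^2+m^\omega+n^\omega))$ at a fixed polynomial accuracy, and substituting $m=n$, yields $\tO(\sqrt n\cdot n^3)=\tO(n^{7/2})$ since $\omega<3$; because the solver follows the central path its dependence on the target accuracy is only logarithmic, which accounts for the additional $\log(1/\eps)$ factor.

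To promote the solver's additive guarantee to the multiplicative $(1+\tfrac{\eps}{10})$ guarantee I would use $\sdp_\infty(\va)\ge\iprod{M(\va),I}=\Tr M(\va)$. If $\Tr M(\va)=0$ then $M(\va)=0$ and $X=I$ is exactly optimal; otherwise, using that the entries of $M(\va)$ are $O(1)$ for the iterates $\va\in B_r(\vb)$ arising in \pref{alg:sdpinfty} (each $\norm{b_i}\le 1$ and $\sum_i\snorm{a_i-b_i}\le r^2=O(m)$, so $\Tr M(\va)\le O(1)$), or after rescaling $M(\va)$ in general, running the solver to additive accuracy $\Theta(\eps\cdot\Tr M(\va))$ costs only $O(\log(1/\eps))$ extra iterations and returns $X$ with $\sdp_\infty(\va)\le(1+\tfrac{\eps}{10})\iprod{M(\va),X}$, for a total of $\tO(n^{7/2}\log(1/\eps))$ time.

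For the second statement, take $X$ to be the matrix produced above, so that $\sdp_\infty(\va)\le(1+\tfrac{\eps}{10})\iprod{M(\va),X}$, equivalently $\sdp_\infty(\va)-\iprod{M(\va),X}\le\tfrac{\eps}{10}\iprod{M(\va),X}$. Under the stated hypothesis $\iprod{M(\va),X}\le\sdp_\infty(\va^*)+\tfrac{\eps}{2}$, it remains only to bound $\sdp_\infty(\va^*)$ by an absolute constant, and in the $\ell_\infty$-bounded setting this is immediate: taking $\va=\boldsymbol 0$ in \pref{def:semilinear-error} gives $\opt(P)\le\max_{\normi{x}\le 1}\tfrac1m\sum_i\iprod{b_i,x}^2\le 1$ since $\normo{b_i}=1$, and then $\sdp_\infty(\va^*)=\min_\va\sdp_\infty(\va)\le\tfrac{\pi}{2}\opt(P)\le\tfrac{\pi}{2}$ by the positive semidefinite Grothendieck inequality, exactly as in the proof of \pref{lem:sdpinftydiam}. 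Combining, $\sdp_\infty(\va)-\iprod{M(\va),X}\le\tfrac{\eps}{10}\bigl(\tfrac{\pi}{2}+\tfrac{\eps}{2}\bigr)\le\tfrac{\eps}{2}$ for every $\eps\le 10-\pi$, which comfortably includes the relevant range $\eps<1$.

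The one genuinely delicate step is reconciling the accuracy and normalization conventions of the \cite{JiangKLP020} solver with the clean multiplicative statement in the first part — i.e. checking that, after rescaling the instance and peeling off a $\log(1/\eps)$ factor, its additive-error output becomes a $(1+\tfrac{\eps}{10})$-relative solution; this is precisely where the lower bound $\sdp_\infty(\va)\ge\Tr M(\va)$, the $O(1)$ bound on the entries of $M(\va)$, and the separate handling of the degenerate case $M(\va)=0$ come in. Everything in the second part is elementary arithmetic.
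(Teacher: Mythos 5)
Your proposal is correct and follows essentially the same route as the paper's proof: invoke the interior-point solver of \cite{JiangKLP020} on the $n$-constraint SDP to get the $\tO(n^{7/2}\log(1/\eps))$ runtime, and then convert the multiplicative $(1+\frac{\eps}{10})$ guarantee to an additive $\frac{\eps}{2}$ one using the constant bound $\sdp_\infty(\va^*)\le\frac{\pi}{2}\opt(P)=O(1)$ from the Grothendieck inequality. The additional care you take in bridging the solver's additive guarantee to a relative one (via $\sdp_\infty(\va)\ge\Tr M(\va)$, rescaling, and the degenerate case $M(\va)=0$) is a reasonable refinement of a step the paper states without elaboration, not a different approach.
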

\begin{proof}
  For matrices $M, C_1,\dots C_k \in \R^{n\times n}$ and $b_i \in \R$, the interior point solver of \cite{JiangKLP020} solves SDPs of the form
  \begin{align}
    \textrm{Maximize: } &\iprod{M,X} \notag\\
    \textrm{subject to: } & \iprod{C_i,X} = b_i\notag\\
    & X \succeq 0\label{eqn:positivesdp}
  \end{align}
  to accuracy $(1+\eps)$ in time $\tO(\sqrt{n}(kn^2 + k^{\omega} + n^{\omega})\log(1/\eps))$.
  In our case $k = n$ as there are $n$ constraints of the form $X_{ii} = 1$ which can be equivalently written as $\iprod{e_ie_i^{\top},X} = 1$. Thus the dominant term in the runtime is $\tO(\sqrt{n}kn^2\log(1/\eps)) = \tO(n^{7/2}\log(1/\eps))$ as desired.

  Running the solver with accuracy parameter $\frac{\eps}{10}$ yields a solution $X$ such that
  \begin{equation}
    \label{eqn:inftymultapprox}
    \sdp_{\infty}(\va)\leq \left(1+\frac{\eps}{10}\right)\iprod{M(\va),X}.
  \end{equation}
  Letting $\va^* = \argmin_{\va}\sdp_{\infty}(\va)$ we have by the positive-semidefinite Grothendieck inequality that
  \begin{equation}
    \label{eqn:sdpinftybound}
    \sdp_{\infty}(\va^*) \leq \frac{\pi}{2}\opt(P) \leq \pi.
  \end{equation}
  Thus, if $\va$ satisfies $\iprod{M(\va),X} \leq \sdp_{\infty}(\va^*) + \frac{\eps}{2}$, then by \pref{eqn:inftymultapprox} and \pref{eqn:sdpinftybound}
  \begin{align*}
    \sdp_{\infty}(\va) \leq \iprod{M(\va),X} + \frac{\eps}{10}\left(\pi + \frac{\eps}{2}\right) \leq \iprod{M(\va),X} + \frac{\eps}{2}.
  \end{align*}
\end{proof}

\section{Analysis of the \texorpdfstring{$\ell_2$}{l-2}-bounded case}
\label{app:ell2analysis}
The analysis of \pref{alg:sdptwo} follows a similar outline to that of \pref{alg:sdpinfty}, but is simpler in several regards.
We begin with a lemma bounding the diameter of the set of feasible solutions to $\sdp_2(\va)$.
\begin{lemma}
  \label{lem:sdptwodiam}
  Let $P$ be a sample-target distribution.
  The optimum $\va^* = \argmin_{\va}\sdp_{2}(\va)$ satisfies
  \[
    \frac{1}{m}\sum_{i=1}^m \norm{a^*_i - b_i}^2 \leq \opt(P)
  \]
\end{lemma}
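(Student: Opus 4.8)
The plan is to follow the same two-step template as the proof of \pref{lem:sdpinftydiam}, but exploiting the fact that in the $\ell_2$-bounded case the semidefinite program is not a relaxation but an exact reformulation of the inner maximization, so that no factor of $\frac{\pi}{2}$ is incurred.

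First I would recall from \pref{eqn:sdptwo} that for $D = \{x \in \R^n \mid \norm{x} \leq \sqrt{n}\}$ we have $\sdp_2(\va) = \max_{x \in D} x^\top M(\va) x = n\norm{M(\va)}$, with equality throughout, since the maximum of a PSD quadratic form over a centered Euclidean ball is attained at a scaling of the top eigenvector. Consequently $\min_{\va}\sdp_2(\va) = \min_{\va}\max_{x\in D} x^\top M(\va) x = \opt(P)$ exactly.

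Next, for the minimizer $\va^* = \argmin_{\va}\sdp_2(\va)$, observe that
\[
  \frac{1}{m}\sum_{i=1}^m \snorm{a^*_i - b_i} = \Tr\Bigl(\tfrac{1}{m}\sum_{i=1}^m (a^*_i - b_i)(a^*_i - b_i)^\top\Bigr) = \Tr M(\va^*).
\]
Since $M(\va^*)$ is an $n\times n$ positive semidefinite matrix, its trace is the sum of its $n$ nonnegative eigenvalues and hence at most $n$ times its largest eigenvalue: $\Tr M(\va^*) \leq n\,\lambda_{\max}(M(\va^*)) = n\norm{M(\va^*)} = \sdp_2(\va^*)$. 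Combining this with the first step yields $\frac{1}{m}\sum_{i=1}^m \snorm{a^*_i - b_i} \leq \sdp_2(\va^*) = \min_{\va}\sdp_2(\va) = \opt(P)$, which is the claim. (Equivalently, one can phrase the trace bound as feasibility of $I$ for the SDP $\sdp_2(\va) = \max\{\iprod{M(\va),X} \mid X \succeq 0,\ \Tr X \leq n\}$, mirroring the use of $I$ in \pref{lem:sdpinftydiam}.)

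There is no real obstacle here: the only nontrivial inequality is the elementary bound $\Tr M \leq n\norm{M}$ for PSD $M$, and the one point worth flagging is that, unlike the $\ell_\infty$ case, $\sdp_2$ computes the inner maximum exactly rather than relaxing it, so the bound is $\opt(P)$ rather than $\frac{\pi}{2}\opt(P)$.
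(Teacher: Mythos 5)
Your proposal is correct and follows essentially the same argument as the paper: both use the exact identity $\min_{\va}\sdp_2(\va) = \opt(P)$ (no $\frac{\pi}{2}$ loss) together with the trace bound $\frac{1}{m}\sum_{i=1}^m \snorm{a^*_i - b_i} = \Tr(M(\va^*)) \leq n\norm{M(\va^*)} = \sdp_2(\va^*)$. No gaps to report.
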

\begin{proof}
  Let $D = \{x\in\R^n \mid \norm{x}\leq \sqrt{n}\}$ and $\va = \{a_1,\dots a_m\}$.
  By the definition of $\sdp_{2}(\va)$ we have
  \[
    \min_{\va}\sdp_{2}(\va) = \min_{\va}\max_{x\in D} x^\top M(\va) x = \opt(P)
  \]
  Letting $\va^* = \argmin_{\va}\sdp_{2}(\va)$ we then have
  \[
    \frac{1}{m}\sum_{i=1}^m\norm{a^*_i - b_i}^2 = \Tr(M(\va^*)) \leq n\norm{M(\va^*)} = \min_{\va}\sdp_{2}(\va) = \opt(P)
  \]
  as desired.
\end{proof}
Next we bound the norm of the gradient of the cost function used in \pref{alg:sdptwo}.
\begin{lemma}
  \label{lem:sdptwograd}
  Let $\va = \{a_1,\dots a_m\}$ with $a_i \in W_i$ for all $i$.
  Let $X = xx^\top$ with $\norm{x} \leq \sqrt{n}$.
  Let $f(\va) = \iprod{M(\va),X}$.
  For any $\va$ satisfying
  \[
    \sum_{i=1}^m \norm{a_i - b_i}^2 \leq r^2
  \]
  we have
  \[
    \norm{\nabla f(\va)} \leq \frac{2 n r}{m}
  \]
\end{lemma}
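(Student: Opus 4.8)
The plan is to mirror the proof of \pref{lem:sdpinftygrad} almost verbatim, since the only structural feature of $X$ that was used there was the operator-norm bound $\norm{X} \leq n$, and this bound continues to hold in the rank-one case. First I would expand the cost function using the definition of $M(\va)$:
\[
  f(\va) = \iprod{M(\va),X} = \frac{1}{m}\sum_{i=1}^m (a_i - b_i)^{\top}X(a_i - b_i),
\]
and observe that, viewing $\va$ as a vector in $\R^{nm}$ subject to the constraint $a_i \in W_i$, the block of $\nabla f(\va)$ corresponding to $a_i$ is $\nabla_{a_i} f(\va) = \frac{2}{m}\Pi_{W_i}X(a_i - b_i)$ (the projection appearing because the feasible set is restricted to $W_i$ in each coordinate block).

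Next I would bound the squared norm block by block, using that $\Pi_{W_i}$ is an orthogonal projection and hence non-expanding, and then that $\norm{X} \leq n$:
\[
  \norm{\nabla f(\va)}^2 = \frac{4}{m^2}\sum_{i=1}^m \norm{\Pi_{W_i}X(a_i - b_i)}^2
    \leq \frac{4}{m^2}\sum_{i=1}^m \norm{X}^2\norm{a_i - b_i}^2
    \leq \frac{4 n^2}{m^2}\sum_{i=1}^m \norm{a_i - b_i}^2.
\]
Plugging in the hypothesis $\sum_{i=1}^m \norm{a_i - b_i}^2 \leq r^2$ and taking square roots gives $\norm{\nabla f(\va)} \leq \frac{2nr}{m}$, as claimed.

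The one place where the argument differs from \pref{lem:sdpinftygrad} is the justification of $\norm{X} \leq n$: in the $\ell_\infty$ case this followed from $\norm{X} \leq \Tr(X) = n$ for a PSD matrix with unit diagonal, whereas here $X = xx^{\top}$ is rank one, so $\norm{X} = \norm{x}^2 \leq n$ directly from the assumption $\norm{x} \leq \sqrt{n}$. This is the only subtlety, and it is immediate; the rest is a routine calculation with no real obstacle. (As a sanity check, the bound on $r$ supplied when this lemma is invoked will come from \pref{lem:sdptwodiam}, which gives $\frac{1}{m}\sum_i\norm{a_i^* - b_i}^2 \leq \opt(P)$, i.e. $r = \sqrt{m\,p}$ for an upper bound $p \geq \opt(P)$, matching the initialization in \pref{alg:sdptwo}.)
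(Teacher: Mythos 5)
Your proposal is correct and follows essentially the same route as the paper's proof: expand $f(\va)$ via $M(\va)$, identify the block gradient $\nabla_{a_i} f(\va) = \frac{2}{m}\Pi_{W_i}X(a_i-b_i)$, drop the projection, and bound $\norm{X} = \norm{xx^\top} = \snorm{x} \leq n$ before plugging in $\sum_i \snorm{a_i - b_i} \leq r^2$. The paper does exactly this, including the same rank-one justification of $\norm{X}\leq n$ in place of the trace bound used in the $\ell_\infty$ case.
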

\begin{proof}
  Recalling the definition of $M(\va)$ we have
  \[
    f(\va) = \iprod{M(\va),X} = \frac{1}{m}\sum_{i=1}^m (a_i - b_i)^{\top}X(a_i - b_i).
  \]
  Therefore the component of $\nabla f(\va)$ corresponding to $a_i$ is given by
  \[
    \nabla_{a_i} f(\va) = \frac{2}{m}\Pi_{W_i} X(a_i - b_i).
  \]
  Thus we estimate the norm by
  \begin{align}
    \norm{\nabla f(\va)}^2 &= \frac{4}{m^2}\sum_{i=1}^m \norm{\Pi_{W_i}X(a_i - b_i)}^2\nonumber\\
      &\leq \frac{4}{m^2}\sum_{i=1}^m \norm{X(a_i - b_i)}^2\nonumber\\
      &\leq \frac{4}{m^2}\sum_{i=1}^m n^2\norm{(a_i - b_i)}^2\nonumber
  \end{align}
  where the last line follows from the fact that
  \[
    \norm{X} = \norm{xx^\top} = \snorm{x} \leq n.
  \]
  Plugging in the assumed bound on $\sum_{i=1}^m\snorm{(a_i - b_i)}$ yields the desired result.
\end{proof}
Next we show that an approximate eigenvector can be computed with the desired accuracy.
\begin{lemma}
  \label{lem:fastapproxeigen}
  A vector $x$ satisfying $\sdp_{2}(\va)\leq \left(1+\frac{\eps}{10}\right)\iprod{M(\va),xx^{\top}} $ can be compute in time $\tO(\frac{mn}{\eps})$.
  Furthermore, if $\iprod{M(\va),xx^{\top}} \leq \sdp_{2}(\va^*) + \frac{\eps}{2}$ then the approximation is additive i.e. $\sdp_{2}(\va) \leq \iprod{M(\va),xx^\top} + \frac{\eps}{2}.$
\end{lemma}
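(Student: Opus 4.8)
The plan is to follow the proof of \pref{lem:fastpackingsdp} essentially verbatim, replacing the interior-point SDP solver with the classical power method. Recall from \pref{eqn:sdptwo} that $\sdp_2(\va) = n\norm{M(\va)} = n\lambda_{\max}(M(\va))$, since $M(\va)\succeq 0$. Hence it suffices to produce a unit vector $u$ whose Rayleigh quotient satisfies $u^\top M(\va)\,u \geq \lambda_{\max}(M(\va))/(1+\tfrac{\eps}{10})$: setting $x = \sqrt{n}\,u$ then gives a point with $\norm{x}=\sqrt{n}$ (feasible for $\sdp_2$) and $\iprod{M(\va),xx^\top} = n\,u^\top M(\va)\,u \geq \sdp_2(\va)/(1+\tfrac{\eps}{10})$, which is exactly the first assertion.

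To produce such a $u$ I would run power iteration on $M(\va)$ from a random Gaussian start. The observation that yields the claimed running time is that we never form $M(\va)$ explicitly: since $M(\va) = \tfrac1m\sum_i (a_i-b_i)(a_i-b_i)^\top$, a matrix-vector product is evaluated as $M(\va)v = \tfrac1m\sum_i \iprod{a_i-b_i,v}\,(a_i-b_i)$, which costs $O(mn)$ time. By the standard gap-free analysis of the power method, after $O\!\bigl(\tfrac{\log(n/\delta)}{\eps}\bigr)$ iterations the normalized iterate $u$ satisfies the Rayleigh-quotient bound above with probability at least $1-\delta$; taking $\delta$ inverse-polynomial in $n$ absorbs the failure probability into the $\tO(\cdot)$ notation at the cost of an extra $\log n$ factor. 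Multiplying the $O(mn)$ per-iteration cost by the $\tO(1/\eps)$ iteration count gives the $\tO(mn/\eps)$ bound.

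For the second assertion I would argue exactly as in \pref{lem:fastpackingsdp}. Write $c = \iprod{M(\va),xx^\top}$, so the first part gives $\sdp_2(\va) \leq (1+\tfrac{\eps}{10})\,c = c + \tfrac{\eps}{10}c$. Under the hypothesis $c \leq \sdp_2(\va^*) + \tfrac{\eps}{2}$, together with $\sdp_2(\va^*) = \min_{\va}\sdp_2(\va) = \opt(P) \leq 1$ (the first equality is the computation in the proof of \pref{lem:sdptwodiam}, and the inequality is the standing assumption for the $\ell_2$-bounded case), we obtain $\tfrac{\eps}{10}c \leq \tfrac{\eps}{10}\bigl(1+\tfrac{\eps}{2}\bigr) \leq \tfrac{\eps}{2}$ for $\eps \leq 1$, hence $\sdp_2(\va) \leq c + \tfrac{\eps}{2}$ as desired. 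All of this is routine; the only step needing care is invoking the correct \emph{gap-free} guarantee for power iteration, namely that $O(\log(n)/\eps)$ steps approximate the top eigenvalue (through the iterate's Rayleigh quotient) to a multiplicative $(1+\eps)$ factor with no dependence on the spectral gap of $M(\va)$, which may be arbitrarily small. I would invoke this as a black box rather than reprove it.
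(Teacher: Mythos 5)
Your proposal matches the paper's proof essentially step for step: scale a $(1+\tfrac{\eps}{10})$-approximate top eigenvector by $\sqrt{n}$, exploit that matrix--vector products with $M(\va)$ cost $O(mn)$ via its factored form, invoke the gap-free power-method guarantee (the paper's citation of Kuczy\'nski--Wo\'zniakowski) for the $\tO(1/\eps)$ iteration count, and derive the additive bound from $\sdp_2(\va^*)\leq\opt(P)\leq 1$ exactly as in \pref{lem:fastpackingsdp}. It is correct and requires no changes.
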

\begin{proof}
  Let $L$ denote the $m \times n$ matrix whose rows are equal to $(a_i - b_i)$ and note that $L^\top L = M(\va)$.
  By the classical power method of \cite{Kuc92} we can compute a vector $v$ such that $\left(1+\frac{\eps}{10}\right) v^\top M(\va) v \geq \norm{M(\va)}$. The runtime is bounded by $\tO(\frac{1}{\eps})$ times the cost of multiplying a vector by the matrix $M(\va)$. Since $M(\va) = L^\top L$, we can split each matrix-vector product into two steps, first multiply by $L$ then by $L^\top$, for a total runtime of $O(mn)$.
  Thus, $v$ can be computed in $\tO(\frac{mn}{\eps})$ time.

  Letting $x = \sqrt{n}v$, and $X = xx^\top$ we have
  \begin{equation}
    \label{eqn:twomultapprox}
    \sdp_{2}(\va)\leq \left(1+\frac{\eps}{10}\right)\iprod{M(\va),X}.
  \end{equation}
  Letting $\va^* = \argmin_{\va}\sdp_{\infty}(\va)$ we have that
  \begin{equation}
    \label{eqn:sdptwobound}
    \sdp_{2}(\va^*) \leq \opt(P) \leq 1.
  \end{equation}
  Thus, if $\va$ satisfies $\iprod{M(\va),X} \leq \sdp_{2}(\va^*) + \frac{\eps}{2}$, then by \pref{eqn:twomultapprox} and \pref{eqn:sdptwobound}
  \begin{align*}
    \sdp_{2}(\va) \leq \iprod{M(\va),X} + \frac{\eps}{10}\left(1 + \frac{\eps}{2}\right) \leq \iprod{M(\va),X} + \frac{\eps}{2}.
  \end{align*}
\end{proof}

We are now ready to give the full proof of \pref{thm:opttwo}.
\begin{proof}[Proof of \pref{thm:opttwo}]
  The main observation is that the algorithm is precisely online gradient descent where the convex cost function observed at each step is given by $f_t(\va) = \iprod{M(\va),X^{(t)}}$. Indeed viewing $\va$ as a vector in $nm$ dimensions (one for each $a_i$) the component of the gradient $\nabla f_t(\va)$ corresponding to $a_i$ is $\frac{2}{m}\Pi_{W_i}X^{(t)}a_i$.
  By \pref{lem:sdptwodiam} the optimal solution $\va^*$ lies in $B_{r^*}(\vb)$, the $\ell_2$-ball of radius $r^* = \sqrt{\opt(P)}$ centered at $\vb = \{b_1,\dots,b_m\}$.
  Thus given an upper bound $p \geq \opt(P)$ it is sufficient to limit the search to $B_r(\vb)$ for $r=\sqrt{m p}$. The last lines of the loop are just projection onto this ball (\pref{lem:sdpinftyproj}).
  Finally, for $\va$ in $B_r(\vb)$, \pref{lem:sdptwograd} gives $\norm{\nabla f_t(\va)} \leq \frac{2nr}{m}$.

  Thus the algorithm is online gradient descent with feasible set diameter $D = 2r$ and gradients bounded by $G = \frac{2nr}{m}$.
  Thus by the textbook analysis of online gradient descent \cite{Hazan16} we have
  \begin{align*}
    \frac{1}{T}\sum_{t = 1}^T f_t(\va^{(t)}) - \min_{\va' \in B_r(\vb)}\frac{1}{T}\sum_{t = 1}^T f_t(\va') \leq \frac{3GD}{2\sqrt{T}} = \frac{3 n p}{2\sqrt{T}}.
  \end{align*}
  Letting $\va^* = \argmin_{\va}\sdp_{2}(\va)$ (which we know is contained in $B_r(\vb)$) we have
  \[
    \frac{1}{T}\sum_{t = 1}^T f_t(\va^{(t)}) - \frac{1}{T}\sum_{t = 1}^T f_t(\va^*) \leq \frac{3 n p}{2\sqrt{T}}.
  \]
  Noting that $f_t(\va^*) = \iprod{M(\va^*),X^{(t)}} \leq \sdp_{2}(\va^*)$ yields
  \[
    \frac{1}{T}\sum_{t = 1}^T f_t(\va^{(t)}) \leq \sdp_{2}(\va^*) + \frac{3 n p}{2\sqrt{T}} \leq \sdp_{2}(\va^*) + \frac{\eps}{2}
  \]
  where the last line follows from the choice of $T = \frac{36 n^2 p^2}{\eps^2}$.
  For  $t_* = \argmin_t f_t(\va^{(t)})$ we therefore have
  \[
   \iprod{M(\va^{(t^*)}),X^{(t^*)}} = f_t(\va^{(t^*)}) \leq \sdp_{2}(\va^*) + \frac{\eps}{2}.
  \]
  Thus by \pref{lem:fastapproxeigen} $X^{(t^*)}$ is an $\frac{\eps}{2}$-additive-approximate solution to $\sdp_{2}(\va^{(t^*)})$, and so we conclude
  \[
     \sdp_{2}(\va^{(t^*)}) \leq \iprod{M(\va^{(t^*)}),X^{(t^*)}} + \frac{\eps}{2} \leq \sdp_{2}(\va^*) + \eps
  \]
  i.e. $\va^{(t^*)}$ is an $\eps$-additive-approximation of $\min_{\va}\sdp_{2}(\va)$.

  To analyze the runtime note that each iteration requires approximately solving an instance of $\sdp_{2}(\va^{(t)})$, multiplying the solution $X^{(t)}$ by each vector $a_i^{(t)} - b_i^{(t)}$, and then rescaling by $\lambda^{(t)}$.
  First, by \pref{lem:fastapproxeigen} the vector $x_t$ can be computed in time $\tO(\frac{mn}{\eps})$.
  Further $X^{(t)}(a_i^{(t)} - b_i^{(t)}) = x_t \iprod{x_t,(a_i^{(t)} - b_i^{(t)})}$ can be computed in time $O(n)$ for each $i$, for a total runtime of $O(mn)$.

  Finally, $\lambda^{(t)}$ can be computed in $O(mn)$ time.
  Putting it all together there are $O(\frac{p^2n^2}{\eps^2})$ iterations each of which takes $\tO(\frac{mn}{\eps})$ time, given an upper bound $p\geq \opt(P)$. We can find an upper bound that is at most $2\opt(P)$ by starting with $p = \frac{1}{n}$ and repeatedly doubling at most $\log(n\opt(P)) = O(\log \rho)$ times. This yields the final iteration count of $O(\frac{\rho^2\log\rho}{\eps^2})$.
\end{proof}

\section{Additional Details on Experiments}
\label{app:experimentaldetails}
We based our code for the experiments (especially for the algorithm from \cite{ChenVV20}) on the publicly available code at  \url{https://github.com/justc2/worst-case-randomly-collected}. The code is available under the MIT License.
For running \pref{alg:sdptwo} in practice we found that $T = 1000$ iterations was more than sufficient to compute a good solution.
Though this is not a particularly scientific comparison, we found that in practice running the code on a laptop with an Intel 8th generation Core i5 and 16GB of RAM that \pref{alg:sdptwo} was significantly faster than the public code for the algorithm from \cite{ChenVV20}. For example, in the snowball sampling experiment \pref{alg:sdptwo} took about 3 seconds, while the algorithm from \cite{ChenVV20} took approximately 100 seconds to compile the program description into the correct form (including the automatic Schur complement reduction described in the previous section), and approximately 30 seconds to numerically solve the resulting SDP.

\end{document}